\definecolor{darkgreen}{rgb}{0.0,0,0.9}
\renewcommand{\P}[1]{{\mathbb{P}}\left[#1\right]}
\newcommand{\E}[1]{{\mathbb{E}}\left[#1\right]}
\newcommand{\st}{\mbox{\rm s.t. }}
\def\equationautorefname~#1\null{%
  equation~(#1)\null
}
\declaretheorem[numberwithin=section]{theorem}
\declaretheorem[sibling=theorem]{lemma}
\declaretheorem[sibling=theorem]{proposition}
\declaretheorem[sibling=theorem]{corollary}
\declaretheorem[sibling=theorem]{fact}
\def\setminus{-}
\newenvironment{proofof}[1]{{\medbreak\noindent \em Proof of #1.  }}{\hfill\qed\medbreak}
\def\eps{{\epsilon}}
\def\R{\mathbb{R}}
\def\C{\mathbb{C}}
\def\bx{{\bf x}}
\def\SS{{\mathcal S}}
\def\by{{{\bf{y}}}}
\def\bx{{{\bf{x}}}}
\def\bz{{{\bf{z}}}}
\def\blambda{{{\bf{\lambda}}}}
\DeclareMathOperator{\image}{Im}
\DeclareMathOperator{\OPT}{OPT}
\begin{document}

\title{Nash Social Welfare, Matrix Permanent, and Stable Polynomials}

\author[1]{Nima Anari}
\author[2]{Shayan Oveis Gharan}
\author[1]{Amin Saberi}
\author[3]{Mohit Singh}

\affil[1]{\small Stanford University.}
\affil[2]{\small University of Washington.}
\affil[3]{\small Microsoft Research, Redmond.}

\date{}
\maketitle

\begin{abstract}
We study the problem of allocating $m$ items to $n$ agents subject to maximizing the Nash social welfare (NSW) objective. We write a novel convex programming relaxation for this problem, and we show that a simple randomized rounding algorithm gives a $1/e$ approximation factor of the objective.

Our main technical contribution is an extension of  Gurvits's lower bound on the coefficient of the square-free monomial of a degree $m$-homogeneous stable polynomial on $m$ variables to all homogeneous polynomials. We  use this extension to analyze the expected welfare of the allocation returned by our randomized rounding algorithm.
\end{abstract}

\section{Introduction}
We study the problem of allocating a set of indivisible items to agents subject to maximizing the Nash social welfare (NSW).
We are given a set of $m$ indivisible items and we want to assign them to $n$ agents. An allocation vector is a vector  $\bx\in\{0,1\}^{n\times m}$ such that for each $j$, exactly one $x_{i,j}$ is 1. We assume that agents have additive valuations. That is, each agent $i$ has non-negative value $v_{i,j}$ for an item $j$ and the value that $i$ receives  for an allocation $\bx$ is
$$ v_i(\bx)=\sum_{j=1}^m x_{i,j}v_{i,j}.$$
The NSW objective is to compute an allocation $\bx$ that maximizes the geometric mean of agents' values,
$$ \left(\prod_{i=1}^n v_i(\bx)\right)^{\frac{1}{n}}.$$

The above objective naturally encapsulates both fairness and efficiency and has been extensively studied as a notion of fair division (see ~\cite{Moulin,CKMPSW16} and references therein).

Recently, there have been a number of results that study the computational complexity of the Nash social welfare objective.
For additive valuations it is shown that it is NP-hard to approximate the NSW objective within $(1-c)$  \cite{NNRR14,Lee15}, for some constant $c>0$. On the positive side, Nguyen and Rothe {\cite{NR14} designed a $\left(\frac{1}{m-n+1}\right)$ approximation algorithm and  Cole and Gkatzelis \cite{CG15} gave the first constant factor, $\left(\frac{1}{2e^{1/e}}\right)$-approximation. Recently, Cole et al. ~\cite{CDGJMVY16} gave a $\frac12$-approximation for the problem.

A closely related problem, that captures only fairness, is the Santa-Clause problem where the goal is to find an allocation to maximize the minimum value among all agents, i.e., $\max_{\bx}\min_i v_i(\bx)$ which has also been studied recently~\cite{AFS08,AS07,BS10,CCK09}.

\subsection{Our Contributions}
Our main contribution is to show an intricate connection between the Nash welfare maximization problem, the theory of real stable polynomials, and the problem of approximating the permanent. We establish this connection in the following manner. We first give a new mathematical programming relaxation for the problem; indeed the standard relaxation has arbitrarily large integrality gap as shown by Cole and Gkatzelis~\cite{CG15}. Our relaxation is a polynomial optimization problem\footnote{It falls in the broad class of geometric programs, where the mathematical program is convex in logarithms of the variables and not the variables itself.} which, despite not being convex in the standard form, can be solved efficiently by a change of variables. We remark that a similar geometric program was used in the context of maximum sub-determinant problem~\cite{NS16}.

More precisely, we study a real stable polynomial $p(y_1,\dots,y_m)$.
We give a simple randomized rounding algorithm such that the expected Nash welfare  of the allocation returned by the algorithm exactly equals the sum of \emph{square-free} coefficients of $p(\by)$.
Thus, our  program needs to maximize  the sum of square-free coefficients of $p(\by)$.
Unfortunately, such an optimization problem in not convex. Instead, we maximize the following proxy
$$ \inf_{\substack{\by>0:\\ \prod_{i\in S} y_i \geq1, \forall S\in {[m]\choose n}}} p(\by).$$

The main part of our analysis is to relate the sum of square-free coefficients of $p(\by)$ to the above proxy.
This desired inequality is a generalization of an elegant result of Gurvits~\cite{Gur06} relating the problem of approximating the permanent of a matrix with the theory of real stable polynomials. We prove this generalization in \autoref{thm:gurvitsextension}. The connection to permanents allows us to use algorithmic results for approximating the permanent due to Jerrum, Sinclair and Vigoda~\cite{JSV04} and we obtain the following result.
\begin{theorem}\label{thm:nashwelfaremain}
There is a randomized polynomial time algorithm for the Nash welfare maximization problem that, with high probability, returns a solution with objective at least $1/e$ fraction of the optimum.
\end{theorem}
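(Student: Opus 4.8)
The plan is to design a mathematical program whose optimum lies above $\OPT^n$ and is matched, in expectation, up to a factor $e^{-n}$ by a natural randomized rounding, and then to convert that expectation bound into an actual allocation by exploiting a connection to the permanent.

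For a fractional assignment $x\in[0,1]^{n\times m}$ with $\sum_i x_{ij}=1$ for every item $j$, define the polynomial $p_x(\by)=\prod_{i=1}^n\big(\sum_{j=1}^m x_{ij}v_{ij}\,y_j\big)$ in the variables $\by=(y_1,\dots,y_m)$; as a product of linear forms with nonnegative coefficients it is real stable and homogeneous of degree $n$. The program maximizes $\inf_{\by}p_x(\by)$ over feasible $x$, where the infimum ranges over all $\by>0$ with $\prod_{i\in S}y_i\ge1$ for every $S\in\binom{[m]}{n}$. I would first check it is a valid relaxation: for the optimal integral allocation $x^\star$, with bundles $A_i$ and values $v_i^\star$ (so $\OPT^n=\prod_i v_i^\star$), weighted AM--GM gives $\sum_{j\in A_i}v_{ij}y_j\ge v_i^\star\prod_{j\in A_i}y_j^{v_{ij}/v_i^\star}$, and multiplying over $i$ yields $p_{x^\star}(\by)\ge\OPT^n\cdot\prod_j y_j^{\alpha_j}$ with $\alpha_j\in[0,1]$ and $\sum_j\alpha_j=n$; since the linear functional $\alpha\mapsto\sum_j\alpha_j\log y_j$ over $\{\alpha\in[0,1]^m:\sum_j\alpha_j=n\}$ is minimized at an $n$-subset indicator, the constraints $\prod_{i\in S}y_i\ge1$ force $\prod_j y_j^{\alpha_j}\ge1$, so the program value is at least $\OPT^n$. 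Tractability follows by substituting $y_j=e^{u_j}$ and taking logarithms: the inner infimum becomes the minimization of a sum of log--sum--exp functions over a polyhedron (convex), the outer problem becomes a concave maximization, and the whole thing is a geometric program solvable to arbitrary accuracy in polynomial time.

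Next I would round a near-optimal $x$ by assigning each item $j$ independently to agent $i$ with probability $x_{ij}$, obtaining a random allocation $\bx$. Expanding $\prod_i v_i(\bx)=\prod_i\sum_j\I{j\to i}v_{ij}$ over ``credit'' functions $f:[n]\to[m]$ and taking expectations, non-injective $f$ contribute $0$ (no item can serve two agents), so $\E{\prod_i v_i(\bx)}=\sum_{f\ \mathrm{injective}}\prod_i x_{i,f(i)}v_{i,f(i)}$, which is exactly the sum of the square-free coefficients of $p_x$. Applying \autoref{thm:gurvitsextension} to the real stable polynomial $p_x$, this sum is at least $\frac{n!}{n^n}\inf_{\by}p_x(\by)$, hence, by near-optimality of $x$ and the relaxation bound, at least $(1-\epsilon)\frac{n!}{n^n}\OPT^n\ge(1-\epsilon)e^{-n}\OPT^n$ (using $n!\ge(n/e)^n$). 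An expectation bound alone does not give an allocation---$\prod_i v_i(\bx)$ can vanish with probability close to $1$---so I would extract one by the method of conditional expectations. After fixing the placement of some items, with agents having accrued values $c_1,\dots,c_n$ and $R$ the set of unassigned items, the conditional expectation $\E{\prod_i v_i(\bx)\mid\text{partial assignment}}$ equals the permanent of the nonnegative matrix $[\,\mathrm{diag}(c_1,\dots,c_n)\mid(x_{ij}v_{ij})_{i\in[n],\,j\in R}\,]$, and by Jerrum--Sinclair--Vigoda~\cite{JSV04} each such permanent admits an FPRAS. Assigning items one at a time to the agent maximizing the estimated conditional expectation therefore yields, with high probability, an allocation with $\prod_i v_i\ge(1-\epsilon)^2\frac{n!}{n^n}\OPT^n$; taking $n$-th roots and absorbing the vanishing factors into the strict slack $(n!/n^n)^{1/n}>1/e$ gives Nash social welfare at least $\OPT/e$, with high probability and in randomized polynomial time.

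The main obstacle is \autoref{thm:gurvitsextension} itself: extending Gurvits's lower bound on the square-free coefficient from degree-$m$ homogeneous stable polynomials in $m$ variables to all homogeneous stable polynomials---here the degree-$n$, $m$-variable case is needed---since this is precisely the inequality that controls the loss incurred by the rounding. The remaining pieces---validity of the relaxation, the expectation identity, and the permanent-based extraction via~\cite{JSV04}---are comparatively routine.
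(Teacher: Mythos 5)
Your proposal follows essentially the same route as the paper: the identical max--inf relaxation, the same independent rounding whose expected objective is the sum of square-free coefficients of $p_x$, an appeal to \autoref{thm:gurvitsextension} to lower bound that sum, and derandomization by conditional expectations where each conditional expectation is a nonnegative (rectangular) permanent, equivalently a weighted $n$-matching count, estimated via \cite{JSV04} (the paper packages this as \autoref{cor:countkmatchings}). One slip worth noting: \autoref{thm:gurvitsextension} yields the constant $\frac{m!\,(m-n)^{m-n}}{m^m\,(m-n)!}\ge e^{-n}$, not $\frac{n!}{n^n}$, so the strict slack $\bigl(n!/n^n\bigr)^{1/n}>1/e$ you use to absorb the $(1-\epsilon)$ losses is not what the theorem provides as cited --- though the theorem's actual constant (which does exceed $e^{-n}$), or tolerating the same $\bigl(1-o(1)\bigr)/e$ factor that the paper's own proof ends with, closes this minor accounting point.
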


We emphasize that unlike the constant factor approximation algorithm by Cole and Gkatzelis \cite{CG15}, our algorithm and its analysis are purely algebraic and completely oblivious to the structure of the underlying market. In particular, unlike other approaches we are not taking advantage of the combinatorial structure of ``spending restricted assignments'' in our rounding algorithms (see \cite{CG15} for more information). This generality makes our approach potentially applicable to a variety of resource allocation problems.


The crucial ingredient of our analysis is the following general inequality about real stable polynomials that generalizes the result of Gurvits \cite{Gur06} (see \autoref{thm:gurvits}) that provided an elegant proof of the Van-der-Waerden conjecture.

\begin{theorem}\label{thm:gurvitsextension}
Let $p$ be a degree $n$-homogeneous real stable polynomial in $y_1,\dots,y_m$ with non-negative coefficients. For any set $S\subseteq [m]$, let $c_S$ denote the coefficient of monomial $y^S:=\prod_{i\in S} y_i$. If  $\sum_{S\in{[m]\choose n}} c_S>0$, then
\begin{eqnarray}\label{eq:gurvitsextension} \sum_{S\in{[m]\choose n}} c_S &\geq& \frac{m! \cdot (m-n)^{m-n}}{m^m\cdot (m-n)!} \inf_{\substack{\by>0:\\  y^S\geq 1, \forall S\in{[m]\choose n}}} p(\by)	\\
&\geq & e^{-n} \inf_{\substack{\by>0:\\  y^S\geq 1, \forall S\in{[m]\choose n}}} p(\by)	.\nonumber
\end{eqnarray}
\end{theorem}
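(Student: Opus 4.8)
The plan is to reduce the statement to the classical theorem of Gurvits (\autoref{thm:gurvits}) by multiplying $p$ by an auxiliary real stable polynomial that raises its degree to $m$, and then translating the resulting capacity bound back into the constrained infimum of $p$. I would set $q(\by) := (y_1 + \cdots + y_m)^{m-n}\, p(\by)$. A nonnegative linear form is real stable, powers of real stable polynomials are real stable, and products of real stable polynomials are real stable, so $q$ is a degree-$m$ homogeneous real stable polynomial with nonnegative coefficients. The only way to build the monomial $y_1 \cdots y_m$ out of $q$ is to take a squarefree monomial $y^S$ with $|S| = n$ from $p$ (contributing $c_S$) together with the complementary monomial $y^{[m]\setminus S}$ from $(y_1+\cdots+y_m)^{m-n}$ (contributing the multinomial coefficient $(m-n)!$), so that $[y_1\cdots y_m]\,q = (m-n)!\sum_{S\in\binom{[m]}{n}} c_S$, which is positive by hypothesis. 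Applying \autoref{thm:gurvits} to $q$ then gives
\[
(m-n)!\sum_{S\in\binom{[m]}{n}} c_S \;\ge\; \frac{m!}{m^m}\,\inf_{\by>0}\frac{(y_1+\cdots+y_m)^{m-n}\,p(\by)}{y_1\cdots y_m}.
\]

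The crux is to show that the capacity on the right is at least $(m-n)^{m-n}C$, where $C := \inf_{\by>0:\ y^S\ge 1\ \forall S\in\binom{[m]}{n}} p(\by)$; here the feasible region is precisely the set of $\by>0$ whose $n$ smallest coordinates have product at least $1$. Fix $\by>0$ and relabel so that $y_1\le\cdots\le y_m$, and let $\mu := (y_1\cdots y_n)^{1/n}$. Then $\by/\mu$ lies in the feasible region (since $y_1\cdots y_n$ is the smallest product of $n$ of the coordinates), so $n$-homogeneity gives $p(\by) = \mu^n\, p(\by/\mu) \ge \mu^n C = (y_1\cdots y_n)\,C$. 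Plugging this in and using AM--GM on the $m-n$ largest coordinates (together with $\sum_{i=1}^m y_i \ge \sum_{i=n+1}^m y_i$),
\[
\frac{(y_1+\cdots+y_m)^{m-n}\,p(\by)}{y_1\cdots y_m} \;\ge\; \frac{C\,(y_1+\cdots+y_m)^{m-n}}{y_{n+1}\cdots y_m} \;\ge\; C\,(m-n)^{m-n}.
\]
Combining the two displays yields the first inequality of \eqref{eq:gurvitsextension}. For the second inequality I would just check that $\frac{m!\,(m-n)^{m-n}}{m^m\,(m-n)!}\ge e^{-n}$: the telescoping identity $\frac{k!}{k^k} = \prod_{j=1}^{k}(1-1/j)^{j-1}$ (with the $j=1$ factor equal to $1$) shows this quantity equals $\prod_{k=m-n+1}^{m}(1-1/k)^{k-1}$, and since $k\mapsto(1-1/k)^{k-1}$ is decreasing with limit $e^{-1}$, each of the $n$ factors is at least $e^{-1}$.

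The part that requires the right idea is choosing the multiplier. Multiplying by the elementary symmetric polynomial $e_{m-n}$ would also make $q$ degree $m$ with $[y_1\cdots y_m]\,q = \sum_S c_S$ outright, but the best capacity estimate it supports is only $\inf\ge C$, because the single-term bound $e_{m-n}(\by)\ge y_{n+1}\cdots y_m$ throws away the factor $(m-n)^{m-n}/(m-n)!$. The power-of-a-linear-form multiplier is exactly calibrated: its $y_1\cdots y_m$-coefficient $(m-n)!$ divides out of the left side and combines with the $m!/m^m$ from \autoref{thm:gurvits} to produce exactly the constant $\frac{m!(m-n)^{m-n}}{m^m(m-n)!}$ once the $(m-n)^{m-n}$ has been supplied by AM--GM. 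Once the multiplier is fixed, the remaining steps --- preservation of stability, the coefficient extraction, the scaling argument, and the two elementary inequalities --- are routine.
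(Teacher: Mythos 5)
Your proposal is correct, and its first half coincides with the paper's proof: you multiply $p$ by $(y_1+\cdots+y_m)^{m-n}$, note the product is $m$-homogeneous and real stable, extract the coefficient identity $[y_1\cdots y_m]\,q=(m-n)!\sum_S c_S$ (this is exactly \autoref{lem:ptopq}), and invoke \autoref{thm:gurvits}. Where you genuinely diverge is in the proof of the capacity comparison, i.e.\ \autoref{prop:pqtop}. The paper first passes through Lagrangian duality (\autoref{lem:duality}) to rewrite $\inf\{p(\by): \by>0,\ y^S\ge 1\ \forall S\}$ as $\sup_{0\le\theta\le1,\ \sum_j\theta_j=n}\inf_{\by>0} p(\by)/\prod_j y_j^{\theta_j}$ --- an argument that needs Slater's condition, a first-order-optimality step to force $\sum_S\lambda_S^*=1$, and the fact that any fractional marginal vector $\theta$ in that polytope is realizable by a distribution over $n$-sets --- and then bounds $\inf_{\by>0}(y_1+\cdots+y_m)^{m-n}/\prod_j y_j^{1-\theta_j}\ge(m-n)^{m-n}$ by weighted AM--GM with an entropy computation. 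You avoid duality entirely with a pointwise scaling argument: for each fixed $\by>0$, normalize by the geometric mean $\mu$ of the $n$ smallest coordinates so that $\by/\mu$ is feasible, use $n$-homogeneity to get $p(\by)\ge C\prod_{i\in T}y_i$, and finish with plain AM--GM on the remaining $m-n$ coordinates; your telescoping identity likewise replaces the induction of \autoref{lem:etomk} for the constant. (The relabeling should be read as ``let $T$ index the $n$ smallest coordinates,'' since $p$ is not symmetric, but your argument only uses the labels as bookkeeping, so this is harmless.) Your route is shorter and more elementary, and makes the extremal structure transparent; the paper's route costs more machinery but yields the dual characterization of the constrained capacity in terms of fractional marginals, which mirrors the convex program actually solved by the algorithm and is of independent interest.
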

Note that second inequality follows by \autoref{lem:etomk}, $\frac{m!}{m^m}\cdot \frac{(m-k)^{m-k}}{(m-k)!} \geq e^{-k}$.
By setting $n=m$ in the above statement, we obtain the result of Gurvits as described in \autoref{thm:gurvits}.

It is not hard to see that the above inequality is (almost) tight. For the stable $n$-homogeneous polynomial $p(y_1,\dots,y_m)=(y_1+\dots+y_n)^n$, the LHS is $n!$ and the RHS is $(n/e)^n$.

\section{Preliminaries}
For a vector $\by$, we write $\by\leq 1$ to denote that all coordinates of $\by$ are at most $1$.
For an integer $n\geq 1$ we use $[n]$ to denote the set of numbers $\{1,2,\dots,n\}$. For any $m,n$, we let $\binom{[m]}{n}$ denote the collection of subsets of $[m]$ of size $n$.
\subsection{Stable Polynomials}
Stable polynomials are natural multivariate generalizations
of real-rooted univariate polynomials. For a complex number $z$, let
$\image(z)$ denote the imaginary part of $z$.
We say a polynomial $p(z_1,\dots,z_m)\in\C[z_1,\dots,z_m]$ is {\em stable}
if whenever $\image(z_i)>0$ for all $1\leq i\leq m$, $p(z_1,\dots,z_m)\neq 0$. We say $p(.)$ is real stable, if it is stable and all of its coefficients are real. It is easy to see that any univariate polynomial is real stable  if and only if it is real rooted.

We say a polynomial $p(z_1,\dots,z_m)$ is degree $n$-homogeneous, or $n$-homogenous, if every monomial of $p$ has degree exactly $n$. Equivalently, $p$ is $n$-homogeneous if for all $a\in\R$, we have
$$ p(a\cdot z_1,\dots,a\cdot z_m)=a^n p(z_1,\dots,z_m).$$

We say a monomial $z_1^{\alpha_1}\dots z_m^{\alpha_m}$ is {\em square-free} if $\alpha_1,\dots,\alpha_m\in\{0,1\}$. For a set $S\subset 2^{[m]}$ we write
$$ z^S=\prod_{i\in S} z_i.$$
Thus, we can represent a square-free monomial with the set of indices of variables of that monomial.

\begin{fact}\label{fact1}
If $p(z_1,\dots,z_m)$ and $q(z_1,\dots,z_m)$ are stable then $p\cdot q$ is stable.	
\end{fact}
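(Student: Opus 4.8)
The plan is to argue directly from the definition of stability, without invoking anything about homogeneity, coefficients, or real-rootedness. Recall that a polynomial $f\in\C[z_1,\dots,z_m]$ is stable exactly when $f(z_1,\dots,z_m)\neq 0$ for every point in the open region $\mathcal{H}^m$, where $\mathcal{H}=\{z\in\C:\image(z)>0\}$. So I would fix an arbitrary point $(z_1,\dots,z_m)$ with $\image(z_i)>0$ for all $i$, and simply evaluate both given polynomials there.

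By hypothesis $p$ is stable, so $p(z_1,\dots,z_m)\neq 0$; likewise $q$ is stable, so $q(z_1,\dots,z_m)\neq 0$. Now $p\cdot q$ is again a polynomial in $\C[z_1,\dots,z_m]$, and since evaluation at a fixed point is a ring homomorphism $\C[z_1,\dots,z_m]\to\C$, its value at the chosen point equals the product of complex numbers $p(z_1,\dots,z_m)\cdot q(z_1,\dots,z_m)$. Because $\C$ is a field and in particular has no zero divisors, a product of two nonzero complex numbers is nonzero, hence $(p\cdot q)(z_1,\dots,z_m)\neq 0$. As the point was an arbitrary element of $\mathcal{H}^m$, this establishes that $p\cdot q$ is stable.

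There is no genuine obstacle to overcome here: the entire content is that evaluation is multiplicative and that $\C$ is an integral domain. The same one-line reasoning extends by induction to show that any finite product of stable polynomials is stable, which is the form in which the fact will actually be used later (e.g., to assemble stable polynomials out of products of stable linear forms); it is isolated as a standalone fact only for ease of repeated reference.
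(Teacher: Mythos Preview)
Your argument is correct and is exactly the standard one-line verification: stability is a nonvanishing condition on the product of upper half-planes, and nonvanishing is preserved under multiplication because $\C$ has no zero divisors. The paper states this fact without proof, so there is nothing to compare against; your write-up supplies the routine justification the paper omits.
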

\begin{fact}\label{fact2}
The polynomial $\sum_i a_i z_i$ is stable if $a_i\geq 0$ for all $i$.	
\end{fact}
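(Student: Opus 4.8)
The plan is to argue straight from the definition of stability, exploiting the fact that ``take the imaginary part'' is an $\R$-linear operation on $\C$. Recall that $p(\bz)=\sum_{i=1}^m a_i z_i$ is stable exactly when $p(\bz)\neq 0$ for every point $\bz$ with $\image(z_i)>0$ for all $i$. So the whole argument amounts to evaluating $\image(p(\bz))$ at such a point.

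Concretely, I would fix a point $\bz$ with $\image(z_i)>0$ for every $i$ and write $z_i=x_i+\sqrt{-1}\,t_i$ with $x_i,t_i\in\R$ and $t_i=\image(z_i)>0$. Since each coefficient $a_i$ is real, $\image(a_i z_i)=a_i t_i$, and by linearity of $\image(\cdot)$ we get $\image(p(\bz))=\sum_{i=1}^m a_i t_i$. Each summand $a_i t_i$ is non-negative because $a_i\geq 0$ and $t_i>0$, and — provided the coefficient vector is not identically zero, which is the only interesting case — at least one summand is strictly positive. Hence $\image(p(\bz))>0$, so in particular $p(\bz)\neq 0$, which is precisely the stability condition.

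The only point needing a word of care is the usual convention concerning the zero polynomial: if all $a_i=0$ then $p\equiv 0$ and the statement is either vacuous or excluded depending on the convention, so the hypothesis should be read as asserting that $\sum_i a_i z_i$ is a genuine (nonzero) linear form. Beyond this bookkeeping there is no obstacle. Together with \autoref{fact1} (stability is preserved under products), this fact provides the base case from which every stable polynomial used later in the paper is built, namely products of non-negative linear forms and limits thereof.
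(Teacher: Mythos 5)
Your proof is correct and is the standard argument: the paper states this as a \textsc{Fact} without proof, and the imaginary-part computation you give (with real non-negative coefficients, $\image\bigl(\sum_i a_i z_i\bigr)=\sum_i a_i\,\image(z_i)>0$ whenever all $\image(z_i)>0$ and some $a_i>0$) is exactly what is implicitly intended. Your remark about the identically-zero case is also consistent with how the paper handles it, namely by checking separately whether the optimum is zero before invoking stability of $p_{\bx}$.
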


Polynomial optimization problems involving real stable polynomials with nonnegative coefficients can often be turned into concave/convex programs. Such polynomials are log-concave in the positive orthant:
\begin{theorem}[\cite{Gul97}]
\label{thm:guler}
For any $n$-homogeneous  stable polynomial  $p(x_1,\dots,x_n)$ with nonnegative coefficients, $\ln p(\bx)$ is concave in the positive orthant, $\R_{++}^n$.
\end{theorem}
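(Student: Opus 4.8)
\medskip\noindent\emph{Proof plan.} The plan is to prove concavity of $\ln p$ on $\R_{++}^n$ by checking it along an arbitrary line segment and reducing to a two-variable homogeneous real stable polynomial, which can be factored into linear forms. We may assume $p\not\equiv 0$, so that $p>0$ on $\R_{++}^n$ (every monomial is nonnegative there and at least one is positive). Fix $\bx,\by\in\R_{++}^n$ and set $g(s,t):=p(s\bx+t\by)$. First I would verify that $g$ is a bivariate real stable polynomial: if $\image(s)>0$ and $\image(t)>0$ then each coordinate satisfies $\image(sx_i+ty_i)=x_i\image(s)+y_i\image(t)>0$ since $x_i,y_i>0$, hence $p(s\bx+t\by)\neq 0$, and the coefficients of $g$ are plainly real. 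By $n$-homogeneity of $p$ we also get $g(as,at)=a^n g(s,t)$, so $g$ is $n$-homogeneous in $(s,t)$.

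Next I would pin down the structure of $g$. Since $g(1,0)=p(\bx)>0$ and $g(0,1)=p(\by)>0$, the univariate polynomial $s\mapsto g(s,1)$ has degree exactly $n$ and a nonzero constant term, so it factors as $g(s,1)=p(\bx)\prod_{j=1}^n(s-\rho_j)$ with every $\rho_j\neq 0$; homogenizing gives $g(s,t)=p(\bx)\prod_{j=1}^n(s-\rho_j t)$. The key point is to locate the roots. A vanishing $g(s,t)=0$ with $\image(s),\image(t)>0$ (and then necessarily $t\neq 0$) would force $s/t=\rho_j$ for some $j$; but one checks that $\{s/t:\image(s)>0,\ \image(t)>0\}$ is exactly $\C\setminus(-\infty,0]$, so stability of $g$ forces each $\rho_j\in(-\infty,0]$, and combined with $\rho_j\neq 0$ this gives $\rho_j=-\mu_j$ for some $\mu_j>0$. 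Hence $g(s,t)=p(\bx)\prod_{j=1}^n(s+\mu_j t)$.

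Finally, the restriction of $\ln p$ to the segment from $\bx$ to $\by$ is $t\mapsto \ln p\big((1-t)\bx+t\by\big)=\ln g(1-t,t)=\ln p(\bx)+\sum_{j=1}^n\ln\big(1+(\mu_j-1)t\big)$. For $t\in[0,1]$ the argument $1+(\mu_j-1)t=(1-t)\cdot 1+t\cdot\mu_j$ is a convex combination of two positive numbers, hence positive, and each summand is the composition of the concave function $\ln$ with an affine function of $t$, so each summand, and therefore the sum, is concave on $[0,1]$. Since the segment was arbitrary, $\ln p$ is concave on $\R_{++}^n$. I expect the only genuinely nontrivial step to be the root-location claim for the bivariate restriction — identifying $\{s/t:\image(s),\image(t)>0\}=\C\setminus(-\infty,0]$ and combining it with the non-vanishing of $g$ at $(1,0)$ and $(0,1)$ to force the factorization into linear forms with positive coefficients; everything else is routine. (Alternatively one can cite that real stable polynomials are hyperbolic with respect to every direction in $\R_{++}^n$ and that hyperbolic polynomials are log-concave on their hyperbolicity cones, but the self-contained bivariate argument above is shorter.)
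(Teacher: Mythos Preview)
Your argument is correct. The paper does not supply a proof of this statement at all; it simply quotes it as a known preliminary result from G\"uler~\cite{Gul97}. So there is nothing in the paper to compare against line by line.

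That said, your route is essentially the standard one and is in the spirit of G\"uler's original: a homogeneous real stable polynomial with nonnegative coefficients is hyperbolic with respect to every direction in $\R_{++}^n$, and hyperbolic polynomials are log-concave on their hyperbolicity cone. Your bivariate restriction $g(s,t)=p(s\bx+t\by)$ and factorization into linear forms with positive coefficients is exactly the mechanism behind that general fact, made self-contained. The only step worth a second glance---the identification $\{s/t:\image(s),\image(t)>0\}=\C\setminus(-\infty,0]$---is right (write $s=r_1e^{i\theta_1}$, $t=r_2e^{i\theta_2}$ with $\theta_1,\theta_2\in(0,\pi)$ and note $\theta_1-\theta_2$ ranges over $(-\pi,\pi)$), and together with $g(0,1)=p(\by)\neq 0$ it pins each root $\rho_j$ to the open negative axis as you claim. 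Nothing is missing.
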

It is also an immediate corollary of H\"older's inequality that a polynomial with nonnegative coefficients is log-convex in terms of the log of its variables (for more details on log-convex functions see \cite{BV06}).
\begin{fact}
\label{fact:logconvex}
For any polynomial $p(y_1,\dots, y_m)$ with nonnegative coefficients, $\ln p(\by)$ is convex in terms of $\ln \by$. In other words $\ln p(e^{z_1},\dots,e^{z_m})$ is convex in terms of $\bz$.
\end{fact}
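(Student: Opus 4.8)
The plan is to deduce this directly from H\"older's inequality. Write $p(y_1,\dots,y_m)=\sum_{\alpha}c_\alpha\, y^\alpha$, where the (finite) sum ranges over exponent vectors $\alpha\in\mathbb{Z}_{\ge 0}^m$ and every $c_\alpha\ge 0$; we may assume $p$ is not identically zero, so that $p(\by)>0$ for all $\by$ in the positive orthant and $\ln p(\by)$ is well defined there. Since $\bz\mapsto(e^{z_1},\dots,e^{z_m})$ is a bijection from $\R^m$ onto the positive orthant, the two formulations in the statement are equivalent, and it suffices to prove that $g(\bz):=\ln p(e^{z_1},\dots,e^{z_m})$ is convex on $\R^m$. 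I would fix $\bz,\bz'\in\R^m$ and $t\in[0,1]$, dispose of the trivial cases $t\in\{0,1\}$, and for $t\in(0,1)$ reduce convexity of $g$ to the multiplicative inequality
$$ p\!\left(e^{t\bz+(1-t)\bz'}\right)\ \le\ p(e^{\bz})^{\,t}\,p(e^{\bz'})^{\,1-t}, $$
where $e^{\bw}$ abbreviates $(e^{w_1},\dots,e^{w_m})$.

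The one real computation is expanding the left-hand side monomial by monomial. For each $\alpha$ one has $c_\alpha e^{\langle\alpha,\, t\bz+(1-t)\bz'\rangle}=\bigl(c_\alpha e^{\langle\alpha,\bz\rangle}\bigr)^{t}\bigl(c_\alpha e^{\langle\alpha,\bz'\rangle}\bigr)^{1-t}$, so
$$ p\!\left(e^{t\bz+(1-t)\bz'}\right)=\sum_\alpha \bigl(c_\alpha e^{\langle\alpha,\bz\rangle}\bigr)^{t}\bigl(c_\alpha e^{\langle\alpha,\bz'\rangle}\bigr)^{1-t}. $$
Applying H\"older's inequality with the conjugate exponents $\lambda=1/t$ and $\mu=1/(1-t)$ (so that $1/\lambda+1/\mu=t+(1-t)=1$) to the nonnegative sequences $a_\alpha=\bigl(c_\alpha e^{\langle\alpha,\bz\rangle}\bigr)^{t}$ and $b_\alpha=\bigl(c_\alpha e^{\langle\alpha,\bz'\rangle}\bigr)^{1-t}$ gives
$$ \sum_\alpha a_\alpha b_\alpha\ \le\ \left(\sum_\alpha a_\alpha^{1/t}\right)^{\!t}\left(\sum_\alpha b_\alpha^{1/(1-t)}\right)^{\!1-t}=\left(\sum_\alpha c_\alpha e^{\langle\alpha,\bz\rangle}\right)^{\!t}\left(\sum_\alpha c_\alpha e^{\langle\alpha,\bz'\rangle}\right)^{\!1-t}=p(e^{\bz})^{\,t}\,p(e^{\bz'})^{\,1-t}, $$
which is exactly the desired inequality; taking logarithms yields $g(t\bz+(1-t)\bz')\le t\,g(\bz)+(1-t)\,g(\bz')$, i.e.\ $g$ is convex, and \autoref{fact:logconvex} follows.

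I do not expect any genuine obstacle: the only points needing a word of care are that $p$ be not identically zero (so the logarithm is defined on the positive orthant), that $\lambda=1/t$ and $\mu=1/(1-t)$ are indeed H\"older-conjugate for $t\in(0,1)$, and that monomials with $c_\alpha=0$ simply drop out of all sums. Equivalently, one may phrase the argument structurally: $g$ is the composition of the log-sum-exp function $(u_j)\mapsto\ln\sum_j e^{u_j}$, which is convex, with the affine map $\bz\mapsto(\ln c_\alpha+\langle\alpha,\bz\rangle)_{\alpha:\,c_\alpha>0}$, and convexity is preserved under precomposition with affine maps; the H\"older computation above is exactly the one-line unpacking of this composition.
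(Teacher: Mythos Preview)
Your argument is correct and is exactly the approach the paper indicates: the paper does not give a detailed proof of this fact but simply asserts that it is an immediate corollary of H\"older's inequality (with a pointer to \cite{BV06}), and your computation is precisely the standard unpacking of that remark. There is nothing to add.
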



The following theorem is proved by Gurvits \cite{Gur06}.
\begin{theorem}[\cite{Gur06}]
\label{thm:gurvits}
For any degree $m$-homogeneous stable polynomial $p(z_1,\dots,z_m)$ with positive coefficients, let $c_{[m]}$ denote the coefficient of the multilinear monomial $z_{1}\cdots z_{m}$. If $c_{[m]}>0$, then
$$ c_{[m]} \geq \frac{m!}{m^m} \inf_{\bz>0} \frac{p(z_1,\dots,z_m)}{z_1\dots z_m}.$$	
\end{theorem}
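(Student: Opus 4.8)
The plan is to deduce \autoref{thm:gurvitsextension} from Gurvits's theorem (\autoref{thm:gurvits}) by ``padding'' the $n$-homogeneous polynomial $p$ up to an $m$-homogeneous polynomial whose single multilinear coefficient --- the one controlled by \autoref{thm:gurvits} --- is a fixed multiple of $\sum_{S\in\binom{[m]}{n}}c_S$, and then paying for the padding inside the infimum via one application of the AM--GM inequality. It suffices to prove the first inequality in \eqref{eq:gurvitsextension}, since the second is exactly \autoref{lem:etomk}; and we may assume $n<m$, as for $n=m$ the statement is \autoref{thm:gurvits} itself (the two forms of the infimum coinciding for an $m$-homogeneous $p$ by scale-invariance).

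First I would set $q(\by):=p(\by)\cdot(y_1+\dots+y_m)^{m-n}$. By \autoref{fact2} the linear form $y_1+\dots+y_m$ is stable, so by \autoref{fact1} so is $(y_1+\dots+y_m)^{m-n}$, and hence so is $q$; moreover $q$ is $m$-homogeneous with non-negative coefficients. The key bookkeeping step is that the coefficient of $y_1\cdots y_m$ in $q$ equals $(m-n)!\cdot\sum_{S\in\binom{[m]}{n}}c_S$: any way of producing $y_1\cdots y_m$ multiplies a degree-$n$ monomial of $p$ by a degree-$(m-n)$ monomial of $(y_1+\dots+y_m)^{m-n}$, and since every variable must end up with exponent exactly $1$, the monomial taken from $p$ is forced to be square-free, say $y^S$ with $|S|=n$ (contributing $c_S$), after which the complementary monomial $y^{[m]\setminus S}$ appears in $(y_1+\dots+y_m)^{m-n}$ with multinomial coefficient $(m-n)!$; non-square-free monomials of $p$ cannot contribute. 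This coefficient is positive by hypothesis, so \autoref{thm:gurvits} applies to $q$ and yields
\[ (m-n)!\sum_{S\in\binom{[m]}{n}}c_S \;\ge\; \frac{m!}{m^m}\,\inf_{\by>0}\frac{p(\by)\,(y_1+\dots+y_m)^{m-n}}{y_1\cdots y_m}. \]
(If one reads the hypothesis of \autoref{thm:gurvits} as demanding strictly positive coefficients, apply it instead to the polynomial obtained from $q$ by the stability-preserving real substitution $y_i\mapsto(1-\epsilon)y_i+\tfrac{\epsilon}{m}\sum_j y_j$, which makes all coefficients strictly positive, and let $\epsilon\to0^+$; this changes the bound only by a factor $(1-\epsilon)^m\to1$.)

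It then remains to prove
\[ \inf_{\by>0}\frac{p(\by)\,(y_1+\dots+y_m)^{m-n}}{y_1\cdots y_m}\;\ge\;(m-n)^{m-n}\,\inf_{\substack{\by>0:\\ y^S\ge1,\ \forall S\in\binom{[m]}{n}}}p(\by), \]
since combining it with the previous display gives $\sum_S c_S\ge\frac{m!\,(m-n)^{m-n}}{m^m\,(m-n)!}\inf p$, which is the desired first inequality. I would prove this pointwise. Fix $\by>0$, let $T\in\binom{[m]}{n}$ index the $n$ smallest coordinates of $\by$, so that $y^T=\min_{|S|=n}y^S$, and put $\lambda:=(y^T)^{-1/n}>0$, $\bz:=\lambda\by$. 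Then $z^S=\lambda^n y^S\ge\lambda^n y^T=1$ for every $S\in\binom{[m]}{n}$, so $\bz$ is feasible for the constrained infimum, and by $n$-homogeneity $p(\by)=\lambda^{-n}p(\bz)=y^T\cdot p(\bz)\ge y^T\cdot\inf_{y^S\ge1}p$. Substituting and cancelling $y^T$ against the corresponding factors of $y_1\cdots y_m$,
\[ \frac{p(\by)\,(y_1+\dots+y_m)^{m-n}}{y_1\cdots y_m}\;\ge\;\frac{(y_1+\dots+y_m)^{m-n}}{\prod_{i\notin T}y_i}\cdot\inf_{y^S\ge1}p\;\ge\;(m-n)^{m-n}\inf_{y^S\ge1}p, \]
where the last inequality is AM--GM applied to the $m-n$ positive numbers $\{y_i:i\notin T\}$, using $\sum_{i=1}^m y_i\ge\sum_{i\notin T}y_i\ge(m-n)\bigl(\prod_{i\notin T}y_i\bigr)^{1/(m-n)}$. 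Taking the infimum over $\by>0$ finishes the proof.

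The argument is short, and the only genuinely non-mechanical choices are (i) padding by $(y_1+\dots+y_m)^{m-n}$ rather than, say, the elementary symmetric polynomial $e_{m-n}$ --- both work, but the former is manifestly stable from \autoref{fact1} and \autoref{fact2} alone --- and (ii) rescaling by $\lambda=(y^T)^{-1/n}$ against the \emph{smallest} $n$-subset $T$, which is exactly what makes the AM--GM bound reproduce the constant $(m-n)^{m-n}$ with no loss. The remaining points (the coefficient identity and, for a pedant, upgrading \autoref{thm:gurvits} to non-negative coefficients) are routine. Tightness of the constant $\frac{m!(m-n)^{m-n}}{m^m(m-n)!}$ is witnessed by $p=(y_1+\dots+y_n)^n$, as noted after the theorem statement.
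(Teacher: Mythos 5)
You have not proved the statement you were asked to prove. The target here is \autoref{thm:gurvits} itself --- Gurvits's bound $c_{[m]}\ge\frac{m!}{m^m}\inf_{\bz>0}p(\bz)/(z_1\cdots z_m)$ for an $m$-homogeneous stable polynomial --- which the paper imports from \cite{Gur06} as a black box and never proves. Your argument \emph{assumes} \autoref{thm:gurvits} in its first displayed inequality (applying it to the padded polynomial $q$), so as a proof of \autoref{thm:gurvits} it is circular and contains none of the actual content: there is no trace of Gurvits's capacity argument, i.e.\ the induction on the number of variables in which one specializes a variable via $\partial_{z_m}p|_{z_m=0}$, uses real stability to show the resulting polynomial is again stable, and controls the loss in capacity at each step by a factor $\bigl(\frac{k-1}{k}\bigr)^{k-1}$ using the real-rootedness of univariate restrictions. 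That is the missing proof, and nothing in your writeup substitutes for it.

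What you have actually written is a (correct) proof of \autoref{thm:gurvitsextension} taking \autoref{thm:gurvits} for granted, and it is worth comparing to the paper's Section~4, since your second half takes a genuinely different route. The first half is identical: pad $p$ by $(y_1+\dots+y_m)^{m-n}$ and compute the multilinear coefficient of the product as $(m-n)!\sum_{S}c_S$, exactly \autoref{lem:ptopq}. For the analogue of \autoref{prop:pqtop}, the paper passes through Lagrangian duality (\autoref{lem:duality}) to rewrite the constrained infimum as $\sup_{\theta}\inf_{\by>0}p(\by)/y^{\theta}$ and then applies a weighted AM--GM with exponents $\alpha_j=(1-\theta_j)/(m-n)$. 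You instead prove the needed inequality pointwise: given $\by>0$, rescale by $\lambda=(y^T)^{-1/n}$ where $T$ indexes the $n$ smallest coordinates, observe the rescaled point is feasible for the constrained infimum, use $n$-homogeneity to extract the factor $y^T$, cancel it against $y_1\cdots y_m$, and finish with plain AM--GM on the remaining $m-n$ coordinates. This avoids duality entirely and is shorter and more elementary than the paper's argument; the perturbation handling the positive-versus-nonnegative coefficient hypothesis is also fine. But none of this addresses the statement in question.
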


\subsection{Counting Matchings in Bipartite Graphs}
Given a   bipartite graph $G=(X,Y,E)$ with weights $w:E\to \R$.
The weight of a perfect matching $M$ is defined as follows:
$$ w(M)=\prod_{e\in M} w_e.$$
Jerrum, Sinclair, and Vigoda in their seminal work designed a FPRAS to count the sum of (weighted) perfect matchings of an arbitrary bipartite graph with nonnegative weights. This problem is also equivalent to the computation of the permanent of a non-negative matrix.
\begin{theorem}[\cite{JSV04}]
There exists a randomized polynomial time algorithm that for any arbitrary bipartite graph $G$ with $n$ vertices and nonnegative weights and $\eps>0$ in time polynomial in the size of $G$ and $1/\eps$ approximates the sum of weights of all perfect matchings of $G$ within a $1+\eps$ multiplicative error, with high probability.
\end{theorem}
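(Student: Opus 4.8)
The statement is the Jerrum--Sinclair--Vigoda theorem, and the plan is to prove it by Markov chain Monte Carlo. Since the sum of weights of the perfect matchings of the weighted bipartite graph $G=(X,Y,E)$ with $|X|=|Y|=n$ equals the permanent of the nonnegative biadjacency matrix $A$ given by $A_{u,v}=w(uv)$ (and $A_{u,v}=0$ on non-edges), it suffices to exhibit an FPRAS for $\mathrm{per}(A)$, where being polynomial in $n$ and the entry bit-length is being polynomial in the size of $G$. I would proceed in three stages: (i) reduce approximate counting of perfect matchings to approximate sampling of them, via a simulated-annealing telescoping of the permanent over an activity parameter; (ii) design a Markov chain on perfect and near-perfect matchings whose stationary distribution is a tractable reweighting of the matchings, from which near-uniform perfect matchings can be extracted with $\poly(n)$ overhead; and (iii) prove that chain mixes in time $\poly(n,\log(w_{\max}/w_{\min}))$ by a canonical-paths / multicommodity-flow argument.

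For stage (i) I would first reduce to a matrix whose nonzero entries lie in $[1,\poly(n)]$: entries that are too small are rounded up to a threshold, and one checks that after an approximate Sinkhorn scaling of $A$ (rescaling rows and columns so all row and column sums are $\approx 1$) every positive entry is at least $n^{-O(1)}$ times the largest, so the rounding perturbs $\mathrm{per}(A)$ by at most a factor $(1+\eps)$ whenever $\mathrm{per}(A)>0$. Then introduce $A(\lambda)$, $\lambda\in[\lambda_{\min},1]$, in which the zero and sub-threshold entries are replaced by $\lambda$: at $\lambda=1$ the matrix is dense with an easily estimated permanent, and at $\lambda=\lambda_{\min}$ it agrees with the rounded $A$ to within $(1+\eps)$. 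Along a geometric schedule $1=\lambda_0>\cdots>\lambda_k=\lambda_{\min}$ with $k=\poly(n)\log(w_{\max}/w_{\min})$, consecutive ratios $\mathrm{per}(A(\lambda_{t+1}))/\mathrm{per}(A(\lambda_t))$ are bounded, and each is estimated to within $(1\pm\eps/k)$ by drawing $\poly(n/\eps)$ near-uniform perfect matchings of $A(\lambda_t)$ and averaging an explicit reweighting statistic; the product of these ratios times $\mathrm{per}(A(1))$ recovers $\mathrm{per}(A)$.

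For stages (ii)--(iii), fix $\lambda$, write $B=A(\lambda)$ (all entries positive), and let the state space $\Omega$ be all perfect matchings $\Omega_{\emptyset}$ together with all near-perfect matchings, partitioned into classes $\Omega_{u,v}$ by the unmatched pair $(u,v)\in X\times Y$. Assign a matching $M\in\Omega_{u,v}$ weight $\bigl(\prod_{e\in M}B_e\bigr)\cdot W(u,v)$, with $W(\emptyset)=1$ and the ideal class weights $W(u,v)$ chosen so that every class has the same total weight. I would run the Metropolis chain whose moves, from a state $M$, either delete a random edge, add the edge joining the two current holes, or slide a hole along an alternating edge; it is reversible with the stated stationary distribution. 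To bound mixing, I would route, for each ordered pair of states, a unit of flow along the alternating paths and cycles of their symmetric difference (taken in a fixed ``unwinding'' order); the balanced class weights keep the flow through any single transition small, giving congestion $\poly(n)$ and hence, by the standard flow bound on the spectral gap, mixing time $\poly(n,\log(w_{\max}/w_{\min}))$. Since $\Omega_{\emptyset}$ carries a $\Theta(1/n^2)$ fraction of the stationary mass, rejecting to perfect matchings costs only $\poly(n)$, and standard amplification boosts the success probability.

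The hard part is that the ideal weights $W(u,v)$ are themselves permanent-like quantities that are not known in advance. I would handle this by bootstrapping: start from crude weights, for which the same flow argument still gives mixing in time $\poly(n,\log(w_{\max}/w_{\min}))$ but with a weaker congestion bound; run the chain, use the observed class frequencies to refine each estimate of $W(u,v)$ to within a constant factor; and iterate $\poly(n)$ times until the weights stabilize, after which the sharp congestion bound applies. Combining the resulting sampler with the annealing reduction of stage (i) and propagating the errors over all $\poly(n)\log(w_{\max}/w_{\min})$ stages yields a randomized algorithm that runs in time polynomial in the size of $G$ and $1/\eps$ and outputs, with high probability, a $(1+\eps)$-multiplicative approximation to the sum of the weights of all perfect matchings of $G$.
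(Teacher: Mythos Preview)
The paper does not prove this theorem; it is quoted with a citation to \cite{JSV04} and used as a black box (the only argument supplied, in the appendix, is for \autoref{cor:countkmatchings}, which is a simple reduction \emph{to} this theorem). So there is nothing in the paper to compare against---you have sketched the Jerrum--Sinclair--Vigoda proof itself, whereas the authors simply invoke it.

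As a sketch of JSV your outline hits the right ingredients: the Markov chain on perfect and near-perfect matchings with hole-weights $W(u,v)$, the canonical-paths congestion bound, and the telescoping over an activity parameter. One point to tighten: the weight-learning and the annealing are not separate phases but intertwined. You describe bootstrapping the $W(u,v)$ at a fixed $\lambda$ ``from crude weights, for which the same flow argument still gives mixing \dots\ with a weaker congestion bound,'' then iterating until the weights stabilize. But if the initial weight estimates are off by more than a polynomial factor, the congestion bound from the canonical-paths argument is \emph{not} polynomial, so the chain need not mix fast enough to improve them. The actual JSV scheme avoids this by maintaining constant-factor estimates of the $W(u,v)$ \emph{along} the annealing schedule: at $\lambda$ close to $1$ the hole weights are known (or trivially estimable), and each small decrement of $\lambda$ changes them by only a bounded factor, so the previous step's rapidly-mixing chain can re-estimate them before proceeding. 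Phrasing it this way closes the gap in your last paragraph.
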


A $k$-matching of a bipartite graph $G=(X,Y,E)$ is a set $M\subseteq E$ of size $|M|=k$ such that no two edges share an endpoint.
The following corollary follows immediately  from the above theorem. For completeness, we prove it in the appendix.
\begin{corollary}\label{cor:countkmatchings}
There is a randomized polynomial time algorithm that for any arbitrary bipartite graph $G$ with nonnegative edge weights and for any given $\eps>0$ and integer $k\leq n$ in time polynomial in the size of $G$ and $1/\eps$ approximates the sum of the weights of all $k$-matchings of $G$ within $1+\eps$ multiplicative error, with high probability.
\end{corollary}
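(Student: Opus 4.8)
The plan is to reduce the computation of the sum of weights of all $k$-matchings of $G$ to the computation of the sum of weights of all \emph{perfect} matchings of a slightly enlarged bipartite graph, and then invoke the Jerrum--Sinclair--Vigoda theorem. Write $G=(X,Y,E)$ with $|X|=a$ and $|Y|=b$; we may assume $k\le\min(a,b)$, since otherwise the answer is $0$.

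First I would build $G'=(X',Y',E')$ by introducing $b-k$ new ``dummy'' vertices $D_X$ on the left, so $X'=X\cup D_X$, and $a-k$ new dummy vertices $D_Y$ on the right, so $Y'=Y\cup D_Y$; note $|X'|=|Y'|=a+b-k$. I keep each edge of $E$ with its original weight, add a weight-$1$ edge between every vertex of $D_X$ and every vertex of $Y$, and a weight-$1$ edge between every vertex of $D_Y$ and every vertex of $X$. The weights stay nonnegative and $|G'|$ is polynomial in $|G|$. The key point is a counting identity: in any perfect matching $M'$ of $G'$, the $a-k$ vertices of $D_Y$ must be matched into $X$ and the $b-k$ vertices of $D_X$ into $Y$, so the remaining $k$ vertices on each side are matched to one another within $E$; hence $M:=M'\cap E$ is a $k$-matching of $G$ with $w(M')=w(M)$ (all dummy edges have weight $1$). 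Conversely, each $k$-matching $M$ of $G$ extends to a perfect matching of $G'$ in exactly $(a-k)!\,(b-k)!$ ways, one for each pair of bijections matching the $a-k$ unmatched $X$-vertices to $D_Y$ and the $b-k$ unmatched $Y$-vertices to $D_X$, every such extension having weight $w(M)$. Therefore
\[
\sum_{M'\ \text{perfect in}\ G'} w(M') \;=\; (a-k)!\,(b-k)!\cdot\sum_{M\ k\text{-matching in}\ G} w(M).
\]

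Finally I would run the Jerrum--Sinclair--Vigoda algorithm on $G'$ to obtain, with high probability and in time polynomial in $|G'|$ and $1/\eps$, a $(1+\eps)$-multiplicative approximation of the left-hand side, and then divide by the exactly computed integer $(a-k)!\,(b-k)!$ to recover a $(1+\eps)$-multiplicative approximation of the desired quantity. I do not expect a genuine obstacle here: the reduction is exact, so the multiplicative guarantee of the permanent oracle transfers without loss. The only points requiring a line of care are that every $k$-matching is counted with the same multiplicity $(a-k)!\,(b-k)!$ among the perfect matchings of $G'$, and that $G'$ is bipartite with balanced sides and nonnegative weights, so that the theorem indeed applies.
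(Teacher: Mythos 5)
Your proposal is correct and follows essentially the same route as the paper's own proof: the same augmented bipartite graph with weight-$1$ dummy vertices, the same $(a-k)!\,(b-k)!$ multiplicity between $k$-matchings of $G$ and perfect matchings of $G'$, and the same invocation of Jerrum--Sinclair--Vigoda. Your explicit division by the multiplicity and the check that dummies must match into the original vertex sets are the same observations the paper makes, just spelled out a bit more carefully.
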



\section{Approximation Algorithm for NSW Maximization}

In this section, we give an approximation algorithm for the NSW maximization problem. We begin by giving a mathematical programming relaxation that can be efficiently solved. For convenience, we will aim to optimize
$$\left(\prod_{i=1}^n v_i(\bx)\right)$$
which is the $n^{th}$ power of the NSW objective. Thus, it is enough to give an $e^{-n}$-approximation to the above objective. With a slight abuse of notation, we will also refer to problem of maximizing the new objective as the Nash welfare problem.
In \autoref{sec:algorithm}, we give a rounding algorithm that proves the guarantee claimed in Theorem~\ref{thm:nashwelfaremain}.

\subsection{Mathematical Programming Relaxation}\label{sec:program}

We use the following mathematical program.
\begin{equation}
\begin{aligned}
	\max_{\bx}  ~\inf_{\by>0: y^S\geq 1,\forall S\in{[m]\choose n}} ~~ &   \prod_{i=1}^n \left(\sum_{j=1}^m x_{i,j}v_{i,j}y_j\right),&\\
	\st ~~~~~~ &  \sum_{i=1}^n x_{i,j} \leq 1 & \forall\; 1\leq j\leq m,\\
	&  x_{i,j}\geq 0 & \forall i,j.
\end{aligned}
\label{cp:main}	
\end{equation}

First, we show that \eqref{cp:main} is a relaxation of the Nash welfare problem and can be optimized in polynomial time to an arbitrary accuracy.
\begin{lemma}
The mathematical program \eqref{cp:main} is a relaxation of the Nash welfare problem and can be optimized in polynomial time.
\end{lemma}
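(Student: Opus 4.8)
The plan is to prove two things: that \eqref{cp:main} is a valid relaxation, and that it can be solved efficiently.

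For the relaxation part, I would take any integral allocation $\bx\in\{0,1\}^{n\times m}$ and show that the inner infimum equals $\prod_i v_i(\bx)$. Fix such an $\bx$. The inner objective is $\prod_{i=1}^n\bigl(\sum_j x_{i,j}v_{i,j}y_j\bigr)$. Expanding this product, each agent $i$ ``picks'' one item from its bundle, so the product is a sum over functions assigning to each agent an owned item; since the bundles are disjoint, every term is a square-free monomial $\prod_{i} v_{i,\sigma(i)} y_{\sigma(i)}$ over some set of $n$ distinct indices $S$. Hence the polynomial $p_{\bx}(\by):=\prod_i\bigl(\sum_j x_{i,j}v_{i,j}y_j\bigr)$ is $n$-homogeneous with nonnegative coefficients, and over $\by$ with $y^S\ge 1$ for all $S\in\binom{[m]}{n}$ every monomial is at least its coefficient, so $p_{\bx}(\by)\ge \sum_{S} c_S = \prod_i v_i(\bx)$ (the last equality because the total coefficient mass of the expansion, setting all $y_j=1$, is exactly $\prod_i\sum_j x_{i,j}v_{i,j}=\prod_i v_i(\bx)$). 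Conversely, taking $\by=\bone$ satisfies the constraints and achieves value exactly $\prod_i v_i(\bx)$, so the infimum equals $\prod_i v_i(\bx)$. Therefore the optimum of \eqref{cp:main} is at least the optimum Nash welfare objective (to the $n$-th power), i.e.\ it is a relaxation.

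For the efficiency part, I would argue that although \eqref{cp:main} is not convex in $\bx$, it becomes so after the substitution $x_{i,j}=e^{t_{i,j}}$ (on the region where variables are positive; zero variables can be handled by a limiting/standard argument, or by optimizing over a closed region with $x_{i,j}\ge\delta$ and letting $\delta\to 0$). Under this substitution the objective $\log \prod_i\bigl(\sum_j e^{t_{i,j}}v_{i,j}y_j\bigr)=\sum_i \log\bigl(\sum_j e^{t_{i,j}}v_{i,j}y_j\bigr)$ is, for each fixed $\by>0$, a sum of log-sum-exp functions of $\bt$, hence convex in $\bt$; and the feasible set $\sum_i e^{t_{i,j}}\le 1$ is convex in $\bt$. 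Taking an infimum over $\by$ of a family of convex functions of $\bt$ preserves concavity? No --- infimum of convex functions need not be convex. So instead I would first handle the inner infimum: for fixed $\bx$, the inner problem $\inf_{\by>0:\,y^S\ge1}\prod_i(\sum_j x_{i,j}v_{i,j}y_j)$ is itself a geometric program --- substituting $y_j=e^{z_j}$, the objective $\sum_i\log(\sum_j x_{i,j}v_{i,j}e^{z_j})$ is convex in $\bz$ by Fact~\ref{fact:logconvex}, and the constraints $y^S\ge1$, i.e.\ $\sum_{i\in S}z_i\ge 0$, are linear in $\bz$; so the inner infimum is the value of a convex minimization problem, solvable to arbitrary accuracy, and more importantly its optimal value as a function of $\bx$ (equivalently of $\bt=\log\bx$) is the partial minimization of the \emph{jointly convex} function $\sum_i\log(\sum_j e^{t_{i,j}+z_j}v_{i,j})$ over the convex set in $(\bt,\bz)$-space cut out by $\sum_{i\in S}z_i\ge0$; partial minimization of a jointly convex function over a convex set is convex in the remaining variables. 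Combining, the map $\bt\mapsto(\text{inner value})$ is convex in $\bt$, so $\bt\mapsto-(\text{inner value})$ --- wait, we want to \emph{maximize} the inner value over $\bt$, i.e.\ maximize a convex function, which is the wrong direction.

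Let me instead just invoke log-concavity directly: for fixed $\by>0$, $p_{\bx}(\by)=\prod_i(\sum_j x_{i,j}v_{i,j}y_j)$ is a product of linear forms in $\bx$ with nonnegative coefficients, hence a stable $n$-homogeneous polynomial in $\bx$ by Facts~\ref{fact1}--\ref{fact2}, so by Theorem~\ref{thm:guler} $\log p_{\bx}(\by)$ is concave in $\bx$ on the positive orthant; an infimum over $\by$ of concave functions of $\bx$ is concave in $\bx$, and the constraints on $\bx$ are linear, so \eqref{cp:main} is the maximization of a concave function over a polytope --- solvable in polynomial time to arbitrary accuracy via the ellipsoid method, using the inner geometric program (convex, as above, via $\by=e^\bz$) together with its dual to supply function values and subgradients. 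The main obstacle I anticipate is precisely this two-level structure: one must verify that the inner infimum is attained (or can be computed to within any desired accuracy) and that evaluating it plus a subgradient of the resulting concave outer objective can be done in polynomial time; handling the boundary $x_{i,j}=0$ and confirming the inner infimum is finite (it is, since $\sum_S c_S>0$ whenever the allocation gives every agent positive value, and one can restrict to $\by$ in a compact set) are the technical points to get right.
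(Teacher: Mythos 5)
Your proposal is correct and follows essentially the same route as the paper: the relaxation half is the same computation (expand $p_{\bx}(\by)$ into square-free monomials, lower-bound each term using $y^S\ge 1$, and match with $\by=\bone$), and your final solvability argument — concavity of $\log p_{\bx}(\by)$ in $\bx$ via \autoref{thm:guler} (or directly as a sum of logs of linear forms), convexity in $\ln\by$ via \autoref{fact:logconvex}, and linear constraints on $\bx$ and $\ln\by$ — is exactly the paper's, with your ellipsoid/subgradient and boundary remarks supplying detail the paper leaves implicit. The abandoned detours in the middle (the $x_{i,j}=e^{t_{i,j}}$ substitution and the partial-minimization attempt) are correctly discarded and do not affect the final argument.
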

\begin{proof}
Let $x^*\in\{0,1\}^{n\times m}$ be an optimal solution of the Nash welfare problem and let $\sigma: [m]\rightarrow [n]$ denote the assignment, i.e., $\sigma(j)=i$ if and only if $x^*_{ij}=1$.
We show that $x^*$ is a feasible solution \eqref{cp:main} of objective $\prod_{i=1}^n v_i(x^*)$.
Consider any $\by>0$ such that $y^S\geq 1$ for each $S\subseteq \binom{[m]}{n}$. Moreover let $\SS=\{S\in \binom{[m]}{n}: \forall i\in [n], \exists j\in S \textrm{ such that } x^*_{ij}=1\}$.  We have
\begin{eqnarray*}
\prod_{i=1}^n \left( \sum_{j=1}^m x^*_{i,j} v_{i,j} y_j\right)& = \sum_{S\in \SS} y^S \prod_{j\in S} v_{\sigma(j),j}\\
&\geq \sum_{S\in \SS} \prod_{j\in S} v_{\sigma(j),j}\\
&=\prod_{i=1}^n \left( \sum_{j=1}^m x^*_{i,j} v_{i,j} \right)
\end{eqnarray*}
as required where we use the fact that $y^S\geq 1$ for each $S\in \SS$. To show that the objective of the mathematical program equals $\prod_{i=1}^n v_i(x^*)$, we consider the solution $y^*_j=1$ for each $j\in [m]$.

To solve the mathematical program, we observe that the function $\ln{ \prod_{i=1}^n \sum_{j=1}^m x_{i,j}v_{i,j}y_j}$ is concave in $x$ and convex in $\ln \by$, where $\ln \by$ is the vector defined by taking logarithms of the vector $\by$ coordinate-wise. These follow from \autoref{thm:guler} and \autoref{fact:logconvex}. Moreover, the constraints on $\bx$ and $\ln \by$ are linear. Thus the above program can be formulated as a convex program and solved to an arbitrary accuracy.
\end{proof}

\subsection{Randomized Algorithm I}\label{sec:algorithm}

We now give a rounding algorithm that proves the required guarantee. \autoref{alg:rounding} will only satisfy the guarantee in expectation. Later, we show how to give a randomized algorithm that gives essentially the same guarantee with high probability.

\begin{algorithm}[h]
\begin{algorithmic}
\State Check whether the optimal solution has weight strictly more than zero using the bipartite matching algorithm. Return zero if answer is false.
\State Find an optimal solution $\bx^*$ to the mathematical program~\eqref{cp:main}.
\State Independently for each item $j\in [m]$, assign item $j$ to one agent where agent $i\in[n]$ is chosen with probability $x^*_{ij}$.
\end{algorithmic}
\caption{An Algorithm for NSW Maximization}
\label{alg:rounding}
\end{algorithm}


The first step of the algorithm can be implemented by a bipartite matching problem. Indeed consider the bipartite graph with one side as agents and other as items. We have an edge $(i,j)$ for agent $i$ and item $j$ if $v_{ij}>0$. The optimal solution to the NSW maximization problem is strictly positive if and only if this bipartite graph has a matching that includes an edge at every agent. Thus, we can check in polynomial time whether the optimal solution has weight zero. For the remainder of the section, we assume that the optimal solution is strictly positive.

We say $\bx\in\R_+^{n\times m}$ is a fractional allocation vector if for each $j\in [m]$, $\sum_{i=1}^n x_{i,j}=1$.
 Given any fractional allocation $\bx$, consider the following polynomial in variables $y_1,\ldots, y_n$,
$$ p_{\bx}(y_1,\dots,y_n)=\prod_{i=1}^n \left(\sum_{j=1}^m x_{i,j}v_{i,j}y_j\right).$$
Observe that $p_{\bx}(\by)$ is a degree $n$-homogenous polynomial in $m$ variables for any $\bx$ or the identically $0$ polynomial.

\begin{lemma}\label{lem:sumsquarefreemon}
We have the following.

\begin{enumerate}
\item For $S\subseteq [m]$ of size $n$, let $c_S$ denote the coefficient of $y^S$ in $p_{\bx^*}(\by)$. Then expected value of \autoref{alg:rounding} equals
$$\sum_{S\in{[m]\choose n}} c_S. $$
\item The optimal value of the relaxation \eqref{cp:main} is
$$\inf_{\by:  y^S\geq 1, \forall S\in{[m]\choose n}} p_{\bx^*}(\by).$$
\end{enumerate}

\end{lemma}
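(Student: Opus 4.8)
The plan is to establish both parts by expanding products of linear forms and applying linearity of expectation; no tool beyond careful bookkeeping is needed.

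For part~(1), I would expand $p_{\bx^*}(\by)=\prod_{i=1}^n\big(\sum_{j=1}^m x^*_{i,j}v_{i,j}y_j\big)$ by selecting one summand from each of the $n$ factors, which writes it as $\sum_{f\colon[n]\to[m]}\big(\prod_{i=1}^n x^*_{i,f(i)}v_{i,f(i)}\big)\prod_{i=1}^n y_{f(i)}$. The monomial $\prod_{i=1}^n y_{f(i)}$ is square-free of degree $n$ exactly when $f$ is injective, in which case it equals $y^{f([n])}$; hence $\sum_{S\in\binom{[m]}{n}} c_S=\sum_{f\text{ injective}}\prod_{i=1}^n x^*_{i,f(i)}v_{i,f(i)}$. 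On the algorithm side, let $\sigma\colon[m]\to[n]$ be the random assignment produced by \autoref{alg:rounding}, so its value is $\prod_{i=1}^n v_i(\sigma)=\prod_{i=1}^n\sum_{j=1}^m \I{\sigma(j)=i}\,v_{i,j}$; expanding this product the same way gives $\sum_{f\colon[n]\to[m]}\big(\prod_{i=1}^n \I{\sigma(f(i))=i}\big)\prod_{i=1}^n v_{i,f(i)}$. If $f$ is not injective, some item is required to go to two distinct agents, so the indicator product vanishes identically; if $f$ is injective, the $n$ events $\{\sigma(f(i))=i\}$ concern pairwise distinct items $f(1),\dots,f(n)$ and are therefore independent under the rounding, with $\E{\prod_{i=1}^n \I{\sigma(f(i))=i}}=\prod_{i=1}^n x^*_{i,f(i)}$. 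Taking expectations term by term then yields $\E{\prod_{i=1}^n v_i(\sigma)}=\sum_{f\text{ injective}}\prod_{i=1}^n x^*_{i,f(i)}v_{i,f(i)}=\sum_{S\in\binom{[m]}{n}} c_S$.

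For part~(2), I would first note that we may assume $\bx^*$ is a fractional allocation vector, i.e.\ $\sum_{i=1}^n x^*_{i,j}=1$ for every $j$: raising any coordinate $x_{i,j}$ only increases $\prod_i\big(\sum_j x_{i,j}v_{i,j}y_j\big)$ since all $v_{i,j}\ge 0$ and $\by>0$, so some optimal solution of \eqref{cp:main} saturates every constraint $\sum_i x_{i,j}\le 1$, and for such a solution $p_{\bx^*}$ is defined. Then by definition $p_{\bx^*}(\by)=\prod_{i=1}^n\big(\sum_{j=1}^m x^*_{i,j}v_{i,j}y_j\big)$, so the inner infimum of \eqref{cp:main} evaluated at $\bx=\bx^*$ is exactly $\inf_{\by>0:\,y^S\ge 1\,\forall S}p_{\bx^*}(\by)$; since $\bx^*$ is an optimal solution of \eqref{cp:main}, this is the optimal value of the relaxation. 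The computation is essentially routine; the only delicate point is the step in part~(1) where non-injective $f$ contribute nothing and injective $f$ contribute an independent product of probabilities — this is precisely where the degree-$n$ square-free structure of $p_{\bx^*}$ matches the product-of-disjoint-sums form of $\prod_i v_i(\sigma)$, and it is what makes the expected welfare coincide with (rather than merely be bounded by) the sum of square-free coefficients.
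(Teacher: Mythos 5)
Your proposal is correct and follows essentially the same route as the paper: expand the product over functions $f\colon[n]\to[m]$ (the paper's $\sigma$), observe that non-injective choices contribute zero while injective ones factor by independence of the item assignments into $\prod_i x^*_{i,f(i)}v_{i,f(i)}$, and identify these terms with the square-free coefficients $c_S$; part (2) is, as in the paper, immediate from the definition of the objective of \eqref{cp:main}. Your extra remark that one may take the constraints $\sum_i x^*_{i,j}\le 1$ to be tight is a harmless refinement the paper leaves implicit.
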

\begin{proof}
Let $X_{i,j}$ be the random variable indicating that $j$ is assigned to $i$. Then, the value that $i$ receives is $\sum_{j=1}^m X_{i,j} v_{i,j}.$
So, the expected value of the algorithm is
\begin{eqnarray*}
\E{\prod_{i=1}^n \sum_{j=1}^m X_{i,j}v_{i,j}} = \sum_{\sigma:[n]\to[m]}  \E{\prod_{i=1}^n X_{i,\sigma(i)}v_{i,\sigma(i)}} = \sum_{\sigma:[n]\to[m]} \P{\forall i: X_{i,\sigma(i)}=1}\prod_{i=1}^n v_{i,\sigma(i)}.
\end{eqnarray*}
where $\sigma$ is summed over all functions from $[n]$ to $[m]$.
Observe that $\P{\forall i: X_{i,\sigma(i)=1}}\neq 0$ only if $\sigma$ is a one-to-one function. In such a case, we have $\P{\forall i: X_{i,\sigma(i)}=1}=\prod_{i=1}^n x_{i,\sigma(i)}$ where we use the fact that each item is assigned independently. Therefore,
\begin{eqnarray*}
\E{\prod_{i=1}^n \sum_{j=1}^m X_{i,j}v_{i,j}} = \sum_{\sigma:[n]\underset{\text{one-to-one}}{\rightarrow}[m]} \prod_{i=1}^n x_{i,\sigma(i)}v_{i,\sigma(i)}.
\end{eqnarray*}
The lemma follows by the fact that for any one-to-one $\sigma$, the term $\prod_{i=1}^n x_{i,\sigma(i)}v_{i,\sigma(i)}$ on the RHS appears in the coefficient of the (square-free) monomial $\prod_{i=1}^n y_{\sigma(i)}$ of $p_{\bx^*}(\by)$. For any $S\in {[m]\choose n}$ the coefficient of $y^S$ in $p_{\bx^*}(\by)$ is the sum of all such terms where $\sigma([n])=S$.

The proof of the second claim is immediate by definition.
\end{proof}

We are now ready to apply \autoref{thm:gurvitsextension} and obtain the following immediate corollary.
\begin{corollary}\label{cor:expectedwelfare}
 The expected objective of \autoref{alg:rounding} is at least
$$ e^{-n} \cdot \OPT$$
where $\OPT$ is the optimal NSW objective.
\end{corollary}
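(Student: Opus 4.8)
I would obtain \autoref{cor:expectedwelfare} by chaining \autoref{lem:sumsquarefreemon} with \autoref{thm:gurvitsextension}, applied to $p_{\bx^*}$ for $\bx^*$ an optimal solution of \eqref{cp:main}. First dispose of the degenerate case: if $\OPT=0$ the first line of \autoref{alg:rounding} returns $0$ and $0\ge e^{-n}\cdot 0$ holds, so assume $\OPT>0$. Then $\bx^*$ is well defined, and its objective in \eqref{cp:main} is positive (the integral NSW optimum is feasible with value $\OPT$), so $p_{\bx^*}(\mathbf 1)>0$ and in particular each factor $\sum_j x^*_{i,j}v_{i,j}y_j$ is a nonzero linear form with nonnegative coefficients. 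By \autoref{fact2} and \autoref{fact1}, $p_{\bx^*}(y_1,\dots,y_m)=\prod_{i=1}^n\big(\sum_{j=1}^m x^*_{i,j}v_{i,j}y_j\big)$ is therefore a degree-$n$ homogeneous real stable polynomial with nonnegative coefficients.

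Next I would identify the two sides of the claimed inequality. By part~1 of \autoref{lem:sumsquarefreemon}, the expected objective of \autoref{alg:rounding} equals $\sum_{S\in\binom{[m]}{n}} c_S$, the sum of the square-free coefficients of $p_{\bx^*}$. By part~2 of \autoref{lem:sumsquarefreemon}, the optimum of \eqref{cp:main} equals $\inf_{\by>0:\ y^S\ge 1\ \forall S}p_{\bx^*}(\by)$, and since \eqref{cp:main} is a relaxation of the Nash welfare problem (as shown above) this quantity is at least $\OPT$. Hence, once $\sum_S c_S>0$ is known, \autoref{thm:gurvitsextension} with $p=p_{\bx^*}$ gives
$$ \sum_{S\in\binom{[m]}{n}} c_S\ \ge\ e^{-n}\inf_{\by>0:\ y^S\ge 1\ \forall S}p_{\bx^*}(\by)\ \ge\ e^{-n}\,\OPT, $$
which is exactly the claim.

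The one point that needs an argument is the hypothesis $\sum_S c_S>0$ of \autoref{thm:gurvitsextension}, and this is where I expect the only real (and still mild) obstacle to lie. I would prove its contrapositive. If $\sum_S c_S=0$, then by part~1 of \autoref{lem:sumsquarefreemon} no one-to-one $\sigma:[n]\to[m]$ has $\prod_i x^*_{i,\sigma(i)}v_{i,\sigma(i)}>0$; equivalently, the bipartite support graph on agents and items with an edge $(i,j)$ whenever $x^*_{i,j}v_{i,j}>0$ has no matching saturating the agents. By Hall's theorem there is a set $A$ of agents whose neighborhood $J$ satisfies $|J|<|A|$, so $|J|<n\le m$. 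Setting $y_j=t$ for $j\in J$ and $y_j=t^{-|J|/(n-|J|)}$ for $j\notin J$ keeps $\by$ feasible for the infimum, since the product of the $n$ smallest coordinates is $t^{|J|}\cdot t^{-|J|}=1$; and $p_{\bx^*}(\by)\to 0$ as $t\to 0^+$, because the $|A|$ factors indexed by $A$ only involve $\{y_j:j\in J\}$ and so contribute $t^{|A|}$ times a positive constant, whereas the other $n-|A|$ factors are each $O\big(t^{-|J|/(n-|J|)}\big)$, giving $p_{\bx^*}(\by)=O\big(t^{\,|A|-|J|(n-|A|)/(n-|J|)}\big)$ with a positive exponent (the exponent times $n-|J|$ equals $n(|A|-|J|)>0$). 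Thus $\inf_{\by>0:\ y^S\ge1}p_{\bx^*}(\by)=0$, contradicting that it is at least $\OPT>0$. (I would also expect the proof of \autoref{thm:gurvitsextension} itself to show directly that this infimum vanishes whenever $\sum_S c_S=0$, which would make even this check unnecessary.) Apart from this, the corollary is genuine bookkeeping: all the substance resides in \autoref{lem:sumsquarefreemon} and \autoref{thm:gurvitsextension}, both already in hand.
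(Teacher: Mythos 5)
Your proposal is correct and follows essentially the same route as the paper: establish stability and nonnegativity of $p_{\bx^*}$ via \autoref{fact1} and \autoref{fact2}, then chain part~1 of \autoref{lem:sumsquarefreemon}, \autoref{thm:gurvitsextension}, and part~2 of \autoref{lem:sumsquarefreemon} together with the relaxation property of \eqref{cp:main}. The only difference is that you verify the hypothesis $\sum_S c_S>0$ carefully (via Hall's theorem and an explicit feasible $\by$ driving the infimum to $0$), whereas the paper dispatches this in one line from the assumption that some assignment has strictly positive NSW; your check is sound and, if anything, fills in that terse step.
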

\begin{proof}
From \autoref{fact1} and \autoref{fact2}, it follows that $p_{\bx^*}(\by)$ as defined above is real stable with non-negative coefficients.  Moreover, it is an $m$-variate polynomial that is degree $n$-homogenous. Let $c_S$ denote the coefficient of square-free monomial $y^S$ for any $S\in \binom{[m]}{n}$. Since, we assume that there is at least one assignment that has strictly positive NSW objective, the sum of coefficients $\sum_{S\in \binom{[m]}{n}}c_S>0$. Thus, from \autoref{thm:gurvitsextension}, we have
\begin{eqnarray} \sum_{S\in{[m]\choose n}} c_S &\geq & e^{-n} \min_{\by:  y^S\geq 1, \forall S\in{[m]\choose n}} p_{\bx^*}(\by).	\nonumber
\end{eqnarray}

Now the proof is immediate using \autoref{lem:sumsquarefreemon}.
\end{proof}

\subsection{Randomized Algorithm II}
From \autoref{cor:expectedwelfare}, the expected NSW of the allocation returned by \autoref{alg:rounding} is at least $1/e^n$ fraction of the optimum.
Repeated applications of the algorithm to obtain a high probability bound is not possible since the output of \autoref{alg:rounding} may have an exponentially large variance.
 In this section, we prove Theorem~\ref{thm:nashwelfaremain} by giving an algorithm that returns the same guarantee as \autoref{alg:rounding} with high probability.
\begin{proofof}{\autoref{thm:nashwelfaremain}}
We use the method of conditional expectations to prove the theorem. We iteratively assign one item at a time, making sure that conditional expectation over the random assignment of the remaining items does not decrease (substantially). We now claim that
for any assignment $x$, the expected value of the objective as given by randomized algorithm \autoref{alg:rounding} equals the number of weighted $n$-matchings of a bipartite graph. Consider the weighted bipartite graph $G=([n],[m],E)$ where for any $1\leq i\leq n$ and $1\leq j\leq m$, $w_{i,j}=x_{i,j}v_{i,j}$. Then, for one-to-one mapping $\sigma:[n]\to[m]$, the coefficient of the monomial $\prod_{i=1}^n x_{i,\sigma(i)}v_{i,\sigma(i)}$ is equal to the weight of the $n$-matching $\{(1,\sigma(1)),(2,\sigma(2)),\dots,(n,\sigma(n))\}$. Therefore, the sum of square-free monomials of $p_{\bx}(\by)$ is equal to the sum of the weights of all $n$-matchings of $G$.

Now, pick any item $j\in [m]$ and any fractional assignment $x$. Consider the following $n$ assignments, $x^1,\ldots, x^n$. Assignment $x^{i}$ assigns item $j$ to $i$ and rest of the items identically to the fractional assignment $x$. Thus $x^i_{ij}=1$, $x^i_{i',j}=0$ for all $i\neq i'$ and $x^{i}_{i'j'}=x_{i'j'}$ for any $j'\neq j$. Let $ALG^i$ denote the objective value of the output of \autoref{alg:rounding} on solution $x^i$ and $ALG$ on $x$. Since the objective value of the \autoref{alg:rounding} is linear in $\{x_{ij}:{i\in [n]}\}$ for fixed $j$, we have
\begin{eqnarray*}
ALG=\sum_{i=1}^n x_{ij} ALG^i
\end{eqnarray*}

Thus $ALG$ is the expected value of the conditional expected value of the output of the \autoref{alg:rounding} when we assign item $j$ to one of the agents; it is assigned to agent $i$ with probability $x_{ij}$.

By \autoref{cor:countkmatchings}, we can estimate $ALG$ and $ALG^i$ within a factor of $1+1/m^3$ factor in polynomial time. Therefore, using the method of conditional expectations, we obtain an allocation of NSW of value at least $\frac{\OPT}{e^n}\cdot (1-1/m^3)^m\geq \frac{\OPT}{\left((1+\frac{1}{n})e\right)^n}$ where $\OPT$ denotes the objective of the optimal allocation.
\end{proofof}

\section{A Generalization of Gurvits's Theorem}
In this section we prove \autoref{thm:gurvitsextension}.
Let $$q(y_1,\dots,y_m)=(y_1+\dots+y_m)^{m-n} $$ be a degree $(m-n)$-homogenous polynomial. It is straightforward to see that it is real stable.
Consider the polynomial $p(\by)q(\by)$. Observe that this is a degree $m$-homogeneous stable polynomial with non-negative coefficients. Since from the assumption of \autoref{thm:gurvitsextension}, at least one of the square-free monomials in $p(\by)$ has a non-zero coefficient, the coefficient of the square-free monomial in  $p(\by)q(\by)$ is non-zero. Let $\alpha_{[m]}$ be the coefficient of the square-free monomial $y_1\cdots y_m$ in $p(\by)q(\by)$.  Thus, from \autoref{thm:gurvits}, we have
\begin{equation}\label{eqn:pqgurivts}
\alpha_{[m]} \geq \frac{m!}{m^m} \inf_{\by>0} \frac{p(\by)q(\by)}{y_1\dots y_m}.
\end{equation}

To prove \autoref{thm:gurvitsextension} it is enough to relate the LHS and the RHS of \eqref{eqn:pqgurivts} to the two sides of \eqref{eq:gurvitsextension}.
This is done in \autoref{lem:ptopq} and \autoref{prop:pqtop}.
\begin{lemma}\label{lem:ptopq}We have
	$$ (m-n)!\sum_{S\in{[m]\choose n}} c_S = \alpha_{[m]}. $$
\end{lemma}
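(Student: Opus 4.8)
The plan is to expand the product $p(\by)q(\by)$ and track exactly which terms contribute to the square-free monomial $y_1\cdots y_m$. Write $p(\by)=\sum_{\alpha}c_\alpha y^\alpha$ where the sum ranges over exponent vectors $\alpha$ with $|\alpha|=n$ (since $p$ is $n$-homogeneous), and recall $q(\by)=(y_1+\cdots+y_m)^{m-n}=\sum_{\beta:|\beta|=m-n}\binom{m-n}{\beta}y^\beta$, where $\binom{m-n}{\beta}$ is the multinomial coefficient. A term $c_\alpha y^\alpha \cdot \binom{m-n}{\beta}y^\beta$ contributes to $y_1\cdots y_m$ precisely when $\alpha+\beta=\bone$, i.e. when $\alpha$ is the indicator vector $\bone_S$ of some set $S$ and $\beta=\bone_{[m]\setminus S}$. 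Since $|\alpha|=n$, such an $S$ must have $|S|=n$, so $c_\alpha=c_S$. For $\beta=\bone_{[m]\setminus S}$ the multinomial coefficient $\binom{m-n}{\beta}$ is just $(m-n)!$ (all entries of $\beta$ are $0$ or $1$, and there are exactly $m-n$ ones). Summing over all $S\in\binom{[m]}{n}$ gives
$$\alpha_{[m]}=\sum_{S\in\binom{[m]}{n}}c_S\cdot(m-n)! = (m-n)!\sum_{S\in\binom{[m]}{n}}c_S,$$
which is the claimed identity.

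The only point requiring a little care is the observation that $p$ having non-negative coefficients means there is no cancellation: every pair $(\alpha,\beta)$ with $\alpha+\beta=\bone$ contributes a non-negative amount, so the coefficient $\alpha_{[m]}$ is genuinely the sum above and not something smaller due to cancellation. This also re-confirms the claim made just before the lemma that $\alpha_{[m]}>0$ under the hypothesis $\sum_S c_S>0$. I do not anticipate a real obstacle here; the main thing to get right is the bookkeeping of which exponent vectors multiply to $\bone$ and the value of the multinomial coefficient $\binom{m-n}{\bone_{[m]\setminus S}}=(m-n)!$. One could alternatively phrase the argument combinatorially: expanding $q$ as a sum over functions from an $(m-n)$-element index set into $[m]$, the contribution to $y_1\cdots y_m$ comes exactly from injective such functions whose image is $[m]\setminus S$ for the set $S$ supporting a square-free monomial of $p$, and there are $(m-n)!$ such functions for each fixed $S$.
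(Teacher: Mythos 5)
Your proof is correct and follows essentially the same route as the paper: the square-free monomial of $p(\by)q(\by)$ arises exactly from pairing a square-free monomial $y^S$ of $p$ with the complementary monomial $y^{\overline{S}}$ of $q$, whose coefficient in $(y_1+\dots+y_m)^{m-n}$ is $(m-n)!$. Your remark about non-negativity is harmless but unnecessary — the identity is a purely formal statement about coefficients and holds without any sign assumption (and $\alpha_{[m]}>0$ then follows directly from the identity itself).
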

\begin{proof}
The RHS is the coefficient of the square-free monomial $y_1\dots,y_m$ in $p(\by)q(\by)$.
The square-free monomial of $p(\by)q(\by)$ is obtained whenever we multiply a square-free monomial $y^S$ of $p(\by)$ with the square-free monomial $y^{\overline{S}}$ of $q(\by)$ for some $S\in \binom{[m]}{n}$. Lemma's statement follows by the fact that the coefficient of $y^{\overline{S}}$ in $q(\by)$ is $(m-n)!$ for every $S\in \binom{[m]}{n}$ and the coefficient of $y^S$ in $p(\by)$ is $c_S$.
\end{proof}

The proof of \autoref{thm:gurvitsextension} is now immediate from the following proposition which relates the RHS of \eqref{eqn:pqgurivts} and \eqref{eq:gurvitsextension}.
\begin{proposition}\label{prop:pqtop}
	$$\inf_{\by>0} \frac{p(\by)q(\by)}{y_1\dots y_m} \geq (m-n)^{m-n}\inf_{\by>0: y^S\geq 1,\forall S\in{m\choose n}} p(\by).$$
\end{proposition}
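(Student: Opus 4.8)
The plan is to prove the inequality pointwise in $\by$ by a single well-chosen rescaling of the variables, after which everything reduces to the AM--GM inequality. Concretely, fix an arbitrary $\by>0$; I will produce a point $\bz>0$ that is feasible for the infimum on the right-hand side (i.e.\ $z^S\ge 1$ for all $S\in\binom{[m]}{n}$) and satisfies $p(\bz)\le \frac{p(\by)q(\by)}{(m-n)^{m-n}y_1\cdots y_m}$. Since $p$ has nonnegative coefficients, $p\ge 0$ on the positive orthant; and in the setting of \autoref{thm:gurvitsextension} some $c_S>0$, so in fact $p(\by)>0$, which lets me cancel $p(\by)$ freely (if $p\equiv 0$ both sides of the proposition vanish). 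I will also treat $m=n$ as a degenerate case with the conventions $0^0=1$ and empty product $=1$, so assume $m>n$ below.

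After relabelling the variables, assume $y_1\le y_2\le\cdots\le y_m$, and set $t=(y_1\cdots y_n)^{-1/n}>0$ and $\bz=t\by$. For any $S\in\binom{[m]}{n}$ the product $\prod_{i\in S}y_i$ is at least the product $y_1\cdots y_n$ of the $n$ smallest coordinates, so $z^S=t^n\prod_{i\in S}y_i\ge t^n\,y_1\cdots y_n=1$; hence $\bz$ is feasible. Because $p$ is $n$-homogeneous, $p(\bz)=t^n p(\by)=\dfrac{p(\by)}{y_1\cdots y_n}$. Therefore the right-hand infimum is at most $\dfrac{p(\by)}{y_1\cdots y_n}$, and it remains to verify
\[
\frac{p(\by)}{y_1\cdots y_n}\ \le\ \frac{p(\by)\,q(\by)}{(m-n)^{m-n}\,y_1\cdots y_m}.
\]
Cancelling $p(\by)>0$ and clearing denominators, this is equivalent to $(m-n)^{m-n}\,y_{n+1}\cdots y_m\le (y_1+\cdots+y_m)^{m-n}=q(\by)$.

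This last inequality is exactly AM--GM applied to the $m-n$ largest coordinates: $y_1+\cdots+y_m\ge y_{n+1}+\cdots+y_m\ge (m-n)\,(y_{n+1}\cdots y_m)^{1/(m-n)}$, and raising both sides to the power $m-n$ gives the claim. Since $\by>0$ was arbitrary, the proposition follows, and combined with \autoref{lem:ptopq} and \eqref{eqn:pqgurivts} this completes the proof of \autoref{thm:gurvitsextension}.

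There is no substantial obstacle here; the only thing to get right is the choice of rescaling. The uniform scaling by $t=(y_1\cdots y_n)^{-1/n}$ is what simultaneously makes $\bz$ feasible and, by homogeneity, keeps $p(\bz)$ exactly equal to $p(\by)/(y_1\cdots y_n)$ with no further loss; the factor $(m-n)^{m-n}$ that appears in the statement is then precisely the AM--GM gap on the complementary block of $m-n$ coordinates. The minor points needing care are the strict positivity $p(\by)>0$ (so the reduction "cancel $p(\by)$'' is valid) and the trivial degenerate case $m=n$, where the statement collapses to the normalization already present in Gurvits's theorem.
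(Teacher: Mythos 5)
Your proof is correct, and it takes a genuinely different route from the paper's. You prove the inequality pointwise: for each fixed $\by>0$ you exhibit an explicit feasible point for the constrained infimum, namely $\bz=t\by$ with $t=\bigl(\prod_{i\in T}y_i\bigr)^{-1/n}$ where $T$ is the index set of the $n$ smallest coordinates (this is the cleaner way to phrase your ``relabelling,'' since it avoids even nominally permuting the arguments of $p$), and then the factor $(m-n)^{m-n}$ drops out of a single application of plain AM--GM to the complementary $m-n$ coordinates, using $y_1+\dots+y_m\geq \sum_{i\notin T} y_i$. The paper instead passes through \autoref{lem:duality}, a Lagrangian-duality reformulation of $\inf_{\by>0:\,y^S\geq 1}p(\by)$ as $\sup_{0\le\theta\le1,\sum_j\theta_j=n}\inf_{\by>0} p(\by)/\prod_j y_j^{\theta_j}$ (which requires verifying Slater's condition and decomposing marginals $\theta$ into distributions over $\binom{[m]}{n}$), then splits the infimum multiplicatively between $p$ and $q$ and bounds $\inf_{\by>0} q(\by)/\prod_j y_j^{1-\theta_j}$ via a weighted AM--GM/entropy computation. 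What the paper's route buys is the capacity-style dual reformulation itself, which connects the new constrained infimum to the quantity appearing in Gurvits's framework and may be of independent interest; what your route buys is brevity and elementarity: no duality, no fractional marginals, no weighted AM--GM, only $n$-homogeneity and nonnegativity of the coefficients of $p$ (stability is not needed for this proposition in either proof), plus the slightly stronger pointwise statement that $\frac{p(\by)q(\by)}{y_1\cdots y_m}\geq (m-n)^{m-n}\inf_{\bz>0:\,z^S\geq1}p(\bz)$ for every individual $\by>0$. Your handling of the edge cases (cancelling $p(\by)>0$, and the $m=n$ case where the claim reduces to the normalization in Gurvits's theorem) is also fine.
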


In the rest of this section we prove the above proposition. We do the proof in two steps. First, we use convex duality to simplify the RHS, and then we prove the proposition.

\begin{lemma}\label{lem:duality}
$$ \inf_{\by>0: y^S\geq 1,\forall S\in{m\choose n}} p(\by) = \sup_{0\leq \theta\leq 1: \sum_{j=1}^m \theta_j=n} \inf_{\by>0} \frac{p(\by)}{y_1^{\theta_1}\dots y_m^{\theta_m}}.$$	
\end{lemma}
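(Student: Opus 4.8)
The plan is to read the left-hand side as the optimal value of a convex program, identify the right-hand side as a reparametrization of its Lagrangian dual, and conclude by strong duality. We may assume $p\not\equiv 0$, since otherwise both sides equal $0$. Substituting $y_j=e^{z_j}$ and writing $f(\bz):=\ln p(e^{z_1},\dots,e^{z_m})$, each constraint $y^S\ge 1$ becomes the linear inequality $\sum_{j\in S}z_j\ge 0$, so the left-hand side equals $\exp(p^*)$ where
$$ p^*:=\inf\big\{\,f(\bz)\ :\ \bz\in\R^m,\ \ -\sum\nolimits_{j\in S}z_j\le 0 \ \text{ for all } S\in\binom{[m]}{n}\,\big\}.$$
By \autoref{fact:logconvex} the objective $f$ is convex, and it is finite on all of $\R^m$ because $p$ has nonnegative coefficients and $p\not\equiv 0$; the constraints are linear; and $\bz=(1,\dots,1)$ is strictly feasible (there $\sum_{j\in S}z_j=n>0$). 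Hence Slater's condition holds, and strong duality yields $p^*=d^*$, the optimal value of the Lagrangian dual of this program. (If $p^*=-\infty$ this is already given by weak duality, and $p^*<\infty$ since the program is feasible.)

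Next I would write the dual out explicitly. Introducing a multiplier $\lambda_S\ge 0$ for the constraint indexed by $S$, the Lagrangian is $f(\bz)-\sum_S\lambda_S\sum_{j\in S}z_j=f(\bz)-\sum_j\theta_j z_j$, where $\theta_j:=\sum_{S\ni j}\lambda_S$; therefore
$$ d^*=\sup_{\lambda\ge 0}\ \inf_{\bz\in\R^m}\big(f(\bz)-\sum\nolimits_j\theta_j z_j\big)=\sup_{\lambda\ge 0}\ \ln\inf_{\by>0}\frac{p(\by)}{y_1^{\theta_1}\cdots y_m^{\theta_m}}.$$
So the dual objective depends on $\lambda$ only through the vector $\theta(\lambda)=\sum_S\lambda_S\bone_S$, and as $\lambda$ ranges over the nonnegative orthant this vector ranges over the conic hull of $\{\bone_S:S\in\binom{[m]}{n}\}$, i.e.\ over $\R_{\ge 0}\cdot\Delta$, where $\Delta:=\conv\{\bone_S:S\in\binom{[m]}{n}\}=\{\theta:0\le\theta\le 1,\ \sum_j\theta_j=n\}$ is the hypersimplex (a standard polytope fact). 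Writing $g(\theta):=\ln\inf_{\by>0}\frac{p(\by)}{y_1^{\theta_1}\cdots y_m^{\theta_m}}$, it remains to prove $\sup_{\theta\in\R_{\ge 0}\cdot\Delta}g(\theta)=\sup_{\theta\in\Delta}g(\theta)$; exponentiating the chain $p^*=d^*=\sup_{\theta\in\Delta}g(\theta)$ then yields precisely the asserted identity (using that $\exp$ is monotone and continuous, with $\exp(-\infty)=0$).

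That last equality of suprema is where $n$-homogeneity of $p$ is used. If $\sum_j\theta_j\ne n$, then replacing $\by$ by $t\by$ multiplies $p(\by)/(y_1^{\theta_1}\cdots y_m^{\theta_m})$ by $t^{\,n-\sum_j\theta_j}$; fixing some $\by>0$ with $p(\by)>0$ and letting $t\to 0$ or $t\to\infty$ according to the sign of $n-\sum_j\theta_j$ shows the infimum is $0$, i.e.\ $g(\theta)=-\infty$ (and $g(0)=-\infty$ as well, since $n\ge 1$). Every element of $\R_{\ge 0}\cdot\Delta$ either lies in $\Delta$ or has coordinate sum different from $n$, so deleting the latter points does not change the supremum, which is the desired equality. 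The two steps I would be most careful about are the appeal to strong duality — essentially free here, since the constraints are linear and a strictly feasible point exists — and the passage from the cone of Lagrange multipliers to the hypersimplex, which is the one genuinely non-routine point and is handled by the homogeneity rescaling above together with the standard vertex description of $\conv\{\bone_S\}$. I do not expect any obstacle beyond bookkeeping of the degenerate case $p^*=-\infty$.
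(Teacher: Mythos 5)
Your proof is correct and follows the same overall route as the paper: the logarithmic change of variables $y_j=e^{z_j}$, Lagrangian duality with Slater's condition for the resulting convex program, the reparametrization of the multipliers $\blambda$ through the marginals $\theta_j=\sum_{S\ni j}\lambda_S$, and the standard fact that $\conv\{\bone_S: S\in\binom{[m]}{n}\}$ is the hypersimplex $\{\theta: 0\leq\theta\leq 1,\ \sum_j\theta_j=n\}$. The one step you handle differently is the justification for restricting the dual to the slice $\sum_j\theta_j=n$: the paper argues via first-order optimality at an optimal pair $(\bz^*,\blambda^*)$, showing $\sum_S\lambda^*_S=1$ by perturbing $\bz^*$ by $\pm\eps\cdot\bone$, whereas you observe directly that $n$-homogeneity of $p$ forces $\inf_{\by>0} p(\by)/\prod_j y_j^{\theta_j}=0$ whenever $\sum_j\theta_j\neq n$, so every multiplier vector off the slice contributes $-\infty$ to the dual and can be discarded without changing the supremum. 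Both arguments use homogeneity in an equivalent way, but your version is slightly more robust: it does not presuppose that the dual optimum (or the inner infimum) is attained, which the paper's ``let $\bz^*,\blambda^*$ be an optimum'' tacitly assumes, and it cleanly bookkeeps the degenerate $-\infty$/zero values. The remaining ingredient you invoke without proof (that the image of $\lambda\mapsto\sum_S\lambda_S\bone_S$ over $\lambda\geq 0$ is $\R_{\geq 0}$ times the hypersimplex) is used at the same level of detail as the paper's appeal to the existence of a distribution on $n$-sets with prescribed marginals, so nothing essential is missing.
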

\begin{proof}
	The proof follows by convex duality. By taking logarithm of $p(\by)$ and the change of variable $z_j=e^{y_j}$, we obtain the following equivalent convex program to the LHS of the above inequality.
	\begin{equation}
	\begin{aligned}
		\inf ~~& \log p(e^{z_1},\dots,e^{z_m}) & \\
		\st ~& \sum_{i\in S} z_i\geq 0 & \forall S\in{[m]\choose n}.
	\end{aligned}
	\label{cp:y}
	\end{equation}
	Let $\lambda_S$ be the Lagrange dual variable associated to the constraint corresponding to the set $S\in \binom{[m]}{n}$. The Lagrangian of the above convex program is defined as follows:
	$$ L(\bz,\blambda )=\log p(e^{z_1},\dots,e^{z_m}) - \sum_{S\in\binom{[m]}{n}} \lambda_S \sum_{i\in S} z_i.$$	
	The Lagrange dual to \eqref{cp:y} is
	$$ \sup_{\blambda \geq 0} \inf_{\bz} L(\bz,\blambda).$$
	Since $p(\by)$ has a non-zero coefficient for at least one of the square-free monomials, the objective of the convex program \eqref{cp:y} is finite for any $\bz$ and it is easy to see that Slater conditions are satisfied. Thus the optimum value of the Lagrange dual is exactly equal to the optimum of \eqref{cp:y}.
	
	Let $\bz^*,\blambda^*$ be an optimum of the above program. We claim that $\sum_S \lambda^*_S=1.$ This simply follows from first order optimality conditions. If $\sum_S \lambda^*_S<1$, then
	\begin{eqnarray*} L(\bz^*-\eps,\blambda^*) &=& \log p(e^{z^*_1-\eps},\dots,e^{z^*_m-\eps}) - \sum_{S\in {[m]\choose n}} \lambda^*_S \sum_{j\in S} (z^*_j-\eps)\\
	 &=& L(\bz^*,\blambda^*) - n\cdot \eps + \sum_{S\in {[m]\choose n}} n\lambda^*_S \eps < L(\bz^*,\blambda^*). 	
	\end{eqnarray*}
	Similarly, if $\sum_{S\in{[m]\choose n}} \lambda_S >1$, $L(\bz^*+\eps,\blambda^*) < L(\bz^*,\blambda^*)$.
	So, $\blambda^*$ is a probability distribution on sets of size $n$. We let $L'(\bz,\theta)=p(e^{\bz})-\sum_{j=1}^m z_j\theta_j$
Thus, we obtain that
	$$ \sup_{0\leq \theta\leq 1: \sum_{j=1}^m \theta_j=n} \inf_{\bz} L'(\bz,\theta) \geq  \sup_{\blambda \geq 0} \inf_{\bz} L(\bz,\blambda).$$
by setting $\theta^*_j=\sum_{S\in {[m]\choose n}: j\in S} \lambda^*_S$ to be the marginal probability of the element $j$.

We now claim that equality must hold in the above. This follows since given any $\theta \in \{0\leq \theta\leq 1: \sum_{j=1}^m  \theta_j=n\}$, there exists a probability distribution over sets of size $n$ such that marginal of every element is exactly $\theta_j$. Setting $\lambda'_S$ to be the probability of set $S\in \binom{[m]}{n}$, we obtain that for any $\bz$ and $\theta$, we have $L'(\bz,\theta)=L(\bz,\blambda')$.
	Putting this together we have,
	$$ \inf_{\sum_{j\in S} z_j\geq 0, \forall S\in{[m]\choose n}}  \log p(e^\bz) = \sup_{0\leq \theta\leq 1: \sum_{j=1}^m \theta_j=n} \inf_{\bz} \log p(e^\bz)-\sum_{j=1}^m z_j\theta_j.$$
	Substituting $e^{z_j}$ to $y_j$ and taking the exponential of the objective functions we have
	$$ \inf_{\by >0 :y^S\geq 1,\forall S\in{[m]\choose n}} p(y) = \sup_{0\leq \theta\leq 1: \sum_{j=1}^m \theta_j=n} \inf_{\by>0} \frac{p(y)}{y_1^{\theta_1} \dots y_m^{\theta_m}}$$
	as desired.
\end{proof}

Now we give the proof of \autoref{prop:pqtop}.
\begin{proofof}{\autoref{prop:pqtop}}
By \autoref{lem:duality}, it is enough to show that
$$\inf_{\by>0} \frac{p(\by)q(\by)}{y_1\dots y_m}\geq (m-n)^{m-n}\sup_{0\leq \theta\leq 1: \sum_{j=1}^m \theta_j=n} \inf_{\by>0} \frac{p(\by)}{y_1^{\theta_1}\dots y_m^{\theta_m}}.$$
Let $\theta$ be any vector such that $0\leq \theta\leq 1$ and $\sum_i \theta_i=k$. Equivalently, it is enough to show for any such $\theta$,
$$ \inf_{\by>0} \frac{p(\by)q(\by)}{y_1\dots y_m}\geq  (m-n)^{m-n}\inf_{\by>0} \frac{p(\by)}{y_1^{\theta_1}\dots y_m^{\theta_m}}.$$

We prove a stronger statement,
$$ \inf_{\by>0} \frac{p(\by)}{y_1^{\theta_1}\dots y_m^{\theta_m}}\cdot \inf_{\by>0} \frac{q(\by)}{y_1^{1-\theta_1}\cdot y_m^{1-\theta_m}}\geq  (m-n)^{m-n}\inf_{\by>0} \frac{p(\by)}{y_1^{\theta_1}\dots y_m^{\theta_m}}.$$
Equivalently, we show that
$$	\inf_{\by>0} \frac{q(\by)}{y_1^{1-\theta_1}\dots y_m^{1-\theta_m}}\geq  (m-n)^{m-n}$$
Taking $(m-n)$-th root of both sides it is enough to show that
\begin{equation}\label{eq:homyalpha} \inf_{\by>0} \frac{y_1+\dots+y_m}{y_1^{\alpha_1}\dots y_m^{\alpha_m}}\geq m-n,	
\end{equation}
where $\alpha_j=\frac{1-\theta_j}{m-n}$ for all $j\in [m]$. Note that by definition of $\theta$, we have $0\leq \alpha_j\leq \frac{1}{m-n}$ and that
$$\sum_i \alpha_j=\frac{m-\sum_{j=1}^m \theta_j}{m-n}=1.$$
Therefore, the ratio in the LHS of \eqref{eq:homyalpha} is homogeneous in $\by$. Thus, to prove \eqref{eq:homyalpha}, it is enough to prove the following
\begin{equation}\label{eq:ysumone} \sup_{\by>0: \sum_{j=1}^m y_j=1} y_1^{\alpha_1} \dots y_m^{\alpha_m} \leq \frac{1}{m-n}.	
\end{equation}
Next, we use the weighted AM-GM inequality.
We let $\alpha_1,\dots,\alpha_m$ be the weights, and recall that $\alpha_j$'s sum to 1. Weighted AM-GM implies that
$$ \sum_{j=1}^m \alpha_j \frac{y_j}{\alpha_j} \geq \prod_{j=1}^m \left(\frac{y_j}{\alpha_j}\right)^{\alpha_j} = \prod_{j=1}^m \alpha_j^{-\alpha_j} \prod_{j=1}^m y_j^{\alpha_j}$$
Therefore,
$$ \sup_{\by>0: \sum_{j=1}^m y_j =1} \prod_{j=1}^m y_j^{\alpha_j} \leq \prod_{j=1}^m \alpha_j^{\alpha_j}.$$
To prove \eqref{eq:ysumone}, it is enough to show that
$$ \prod_{j=1}^m \alpha_j^{\alpha_j} \leq \frac{1}{m-n}.$$
Or equivalently,
$$ \sum_{j=1}^m -\alpha_j \log \alpha_j \geq \log(m-n).$$
Since $\alpha_j\leq \frac{1}{m-n}$ and that
$\sum_{j=1}^m \alpha_j=1$, we have

\begin{eqnarray*}
 \sum_{j=1}^m -\alpha_j \log \alpha_j &\geq &\sum_{j=1}^m -\alpha_j \log{\frac{1}{m-n}} =   \log{(m-n)} \sum_{j=1}^m \alpha_j= \log{(m-n)},
\end{eqnarray*}
as required.
\end{proofof}

\bibliographystyle{alpha}

\bibliography{ref2}

\appendix
\section{Miscellaneous Lemmas}
\begin{lemma}
\label{lem:etomk}
For any $k\leq m$, we have
$$ \frac{m!}{m^m}\cdot \frac{(m-k)^{m-k}}{(m-k)!} \geq e^{-k}.$$	
\end{lemma}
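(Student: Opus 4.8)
The plan is to rewrite the claimed inequality as a telescoping product. Define $f(n) := n!/n^n$ for $n \ge 1$, together with $f(0) := 1$ (consistent with the convention $0^0 = 1$). Then the left-hand side of the lemma is precisely $f(m)/f(m-k)$, so it suffices to show $f(m)/f(m-k) \ge e^{-k}$. I would expand this ratio as a product of $k$ consecutive ratios,
$$\frac{f(m)}{f(m-k)} = \prod_{n=m-k+1}^{m} \frac{f(n)}{f(n-1)},$$
and compute $f(n)/f(n-1) = \frac{n\,(n-1)^{n-1}}{n^{n}} = \bigl(1 - \tfrac1n\bigr)^{n-1}$. So the whole statement reduces to bounding each of these $k$ factors from below by $1/e$.

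The core claim I would isolate is therefore: for every integer $n \ge 1$, $\bigl(1 - \tfrac1n\bigr)^{n-1} \ge e^{-1}$. For $n \ge 2$ I would take logarithms and reduce it to $\ln\!\bigl(1 + \tfrac{1}{n-1}\bigr) \le \tfrac{1}{n-1}$, which is the standard inequality $\ln(1+t) \le t$ at $t = 1/(n-1) > 0$; equivalently, one can quote the familiar fact that $(1+1/x)^x$ increases to $e$, applied at $x = n-1$, so that $\bigl(1 - \tfrac1n\bigr)^{n-1} = \bigl(1 + \tfrac{1}{n-1}\bigr)^{-(n-1)} \ge 1/e$. The value $n=1$ (which appears in the product only when $k=m$) gives the term $0^0 = 1 \ge 1/e$.

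Multiplying the $k$ bounds $\bigl(1 - \tfrac1n\bigr)^{n-1} \ge e^{-1}$ over $n = m-k+1, \dots, m$ then yields $f(m)/f(m-k) \ge e^{-k}$, which is the lemma. I do not anticipate a real obstacle: the only point needing a word of care is the boundary case $k=m$ (the $n=1$ factor), which the $0^0=1$ convention handles, or which one can instead dispatch directly via $m! \ge (m/e)^m$ — itself immediate from $e^m = \sum_{j\ge 0} m^j/j! \ge m^m/m!$.
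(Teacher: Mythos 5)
Your proof is correct and is essentially the paper's argument in unrolled form: the paper proves the lemma by induction on $k$, and its induction step is exactly your per-factor bound, both reducing to the same inequality $\bigl(1+\tfrac{1}{n-1}\bigr)^{n-1}\le e$ (i.e.\ $1+x\le e^x$). Your explicit handling of the $k=m$ boundary via $0^0=1$ (or $m!\ge (m/e)^m$) is a minor tidying of the same route.
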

\begin{proof}
We prove by induction on $k$. The claim obviously holds for $k=0$. For the induction step, it is sufficient to show that
$$ \frac{1}{e}\cdot \frac{m!}{m^m} \cdot \frac{(m-k)^{m-k}}{(m-k)!} \leq \frac{m!}{m^m} \cdot\frac{(m-(k+1))^{m-(k+1)}}{(m-(k+1))!}.$$	
Equivalently, it is enough to show that
$$ \left(\frac{m-k}{m-(k+1)}\right)^{m-(k+1)}\leq e.$$
The above can be written as $(1+\frac{1}{m-k-1})^{m-k-1} \leq e$. The latter follows by the fact that $1+x \leq e^x$.
\end{proof}

\begin{proofof}{Corollary~\ref{cor:countkmatchings}}
Suppose that we are given a bipartite graph $G=(X,Y,E)$ where $X=\{x_1,\dots,x_m\}$ and $Y=\{y_1,\dots,y_n\}$. Note that $m$ is not necessarily equal to $n$. We construct another graph $G'=(X',Y',E')$ such there is a one-to-$(m-k)!(n-k)!$ and onto mapping between the $k$-matchings of $G$ and the perfect matchings of $G'$.
That is each $k$-matching of $G$ is mapped to a unique set of $(m-k)!(n-k)!$ perfect matchings of $G'$, and for each perfect matching $M'$ of $G'$ there is a $k$-matching of $G$ that has $M'$ in its image.

 Let $X'=X\cup\{x_{m+1},\dots,x_{m+n-k}\}$ and $Y'=Y\cup\{y_{n+1},\dots,y_{m+n-k}\}$.
The set of edges $E'$ is the union of $E$ and the following edges: Connect all vertices of $X'\setminus X$ to all vertices of $Y$ with weight 1, and connect all vertices of $Y'\setminus Y$ to all vertices of $X$ with weight 1. Observe that for any $k$-matching $M$ of $G$ there are exactly $(m-k)!(n-k)!$ perfect matchings in $G'$ that contain $M$; for any such prefect matching $M'$, we have $M'\setminus M\subseteq E'\setminus E$. So, this mapping is one-to-$(m-k)!(n-k)!$. Furthermore, any perfect matching $M'$ of $G'$ has exactly $k$ edges in $E$, i.e., $|M'\cap E|=k$. So, this mapping is onto.


It follows that a $1+\eps$  approximation to the sum of the weights of all perfect matchings of $G'$ is a $1+\eps$ approximation to the sum of the weights of all $k$-matchings of~$G$.
\end{proofof}


\end{document}